\newtheorem{theorem}{Theorem}
\newtheorem{lemma}{Lemma}
\newtheorem{proposition}{Proposition}
\def\openone{\leavevmode\hbox{\small1\kern-3.3pt\normalsize1}}
\begin{document}
\title{Geometric Origin of the Tennis Racket Effect}
\author{P. Marde\v si\'c\footnote{Institut de Math\'ematiques de Bourgogne - UMR 5584 CNRS, Universit\'e de Bourgogne- Franche Comt\'e,
	9 avenue Alain Savary, 	BP 47870, 21078 DIJON, France, pavao.mardesic@u-bourgogne.fr}, L. Van Damme, G. J. Gutierrez Guillen, D. Sugny\footnote{Laboratoire Interdisciplinaire Carnot de
Bourgogne (ICB), UMR 6303 CNRS-Universit\'e Bourgogne-Franche Comt\'e, 9 Av. A.
Savary, BP 47 870, F-21078 Dijon Cedex, France, dominique.sugny@u-bourgogne.fr}}

\maketitle


\begin{abstract}
The tennis racket effect is a geometric phenomenon which occurs in a free rotation of a three-dimensional rigid body. In a complex phase space, we show that this effect originates from a pole of a Riemann surface and can be viewed as a result of the Picard-Lefschetz formula. We prove that a perfect twist of the racket is achieved in the limit of an ideal asymmetric object. We give upper and lower bounds to the twist defect for any rigid body, which reveals the robustness of the effect. A similar approach describes the Dzhanibekov effect in which a wing nut, spinning around its central axis, suddenly makes a half-turn flip around a perpendicular axis and the Monster flip, an almost impossible skate board trick.
\end{abstract}


Consider an experiment that every tennis player has already made. The tennis racket is held by the handle and thrown in the air so that the handle makes a full turn before catching it. Assume that the two faces of the head can be distinguished. It is then observed, once the racket is caught, that the two faces have been exchanged. The racket did not perform a simple rotation around its axis, but also an extra half-turn. This twist is called the tennis racket effect (TRE). An
intuitive understanding of TRE is given in~\cite{TRE}. It is also known as Dzhanibekov's effect (DE), named after the Russian cosmonaut who made a similar experiment in 1985 with a wing nut in zero gravity~\cite{reilly,dzhanibekov}. The wing nut spins rapidly around its central axis and flips suddenly after many rotations around a perpendicular axis~\cite{dzhanibekov}.  The Monster Flip Effect (MFE) is a free style skate board trick. It consists in jumping with the skateboard and making it turn around its transverse axis with the wheels falling back to the ground. This trick is very difficult to execute since TRE predicts precisely the opposite, turning about this axis should produce a $\pi$- flip and the wheels should end up in the air. The video~\cite{monster} shows that this trick can be made with success after several attempts.

We propose in this letter to describe these phenomena. The results are established for a tennis racket and then extended to the two other systems.
The motion is modeled as a free rotation of an asymmetric rigid body, which has three different moments of inertia along its three inertia axes~\cite{Golstein50}. The axes with the smallest and largest moments of inertia are stable, while the intermediate one is unstable. It is precisely this instability which is at the origin of TRE~\cite{cushman}. A more detailed description can be obtained from Euler's equations. The three-dimensional rotation is an example of Hamiltonian integrable systems~\cite{arnold} in which the trajectories can be expressed analytically. The dynamics of the rigid body in the space-fixed frame are given by elliptic integrals of the first and third kinds, which lead to a very accurate description of TRE~\cite{cushman,MSA91,vandamme:2017}. However, this analysis does not reveal its geometric character. A geometric point of view provides valuable physical insights, in particular with respect to the robustness of the corresponding physical phenomenon. Different geometric structures have been studied recently in the context of mechanical systems with a small number of degrees of freedom. Among others, we can mention the Berry phase~\cite{phasebook}, Hamiltonian monodromy~\cite{efstathioubook,dullin:2009,co2:2004}, singular tori~\cite{sugny:2009} and the Chern number~\cite{faure} which found applications in classical and quantum physics. In this letter, we show that the geometric origin of TRE is a pole of a Riemann surface defined in a complex phase space. This effect can be interpreted as the result of the Picard-Lefschetz formula which describes the possible deformation of an integration contour in a complex space after pushing it around a singular fiber~\cite{AGV,zoladek:2006}. The geometric character of DE and MFE can also be deduced from this approach and helps understanding in which conditions they can be realized. Note that similar complex methods have been used to describe Hamiltonian monodromy~\cite{audin:2002,beukers:2002,sugny:2008}.

The position of the body-fixed frame $(x,y,z)$ with respect to the space-fixed frame $(X,Y,Z)$ defines the free rotation of a rigid body~\cite{reilly,Golstein50,arnold}. Three Euler angles $(\theta,\phi,\psi)$ characterize the relative motion of the body-fixed frame. The angle $\theta$ is the angle between the axis $z$ and the space-fixed axis $Z$. The rotation of the body about the axes $Z$ and $z$ is respectively described by the angles $\phi$ and $\psi$ (see Sup. Sec. II). The moments of inertia $I_x$, $I_y$ and $I_z$ are the elements of the diagonal inertia matrix in the body-fixed frame, with the convention $I_z<I_y<I_x$. As displayed in Fig.~\ref{fig1}, a tennis racket is a standard example of an asymmetric rigid body in which the $z$-axis is along the handle of the racket, $y$ lies in the plane of the head of the racket and $x$ is orthogonal to the head (See Sup. Sec. I). TRE consists in a $2\pi$-rotation of the body around the $y$-axis. The precession of the handle is measured by the angle $\phi$. TRE then manifests by a twist of the head about the $z$- axis, i.e. by a variation $\Delta\psi =\pi$, along a trajectory such that $\Delta\phi=2\pi$~\cite{vandamme:2017}.\\
\begin{figure}[!ht]
	\centering
	\includegraphics[width=1\linewidth]{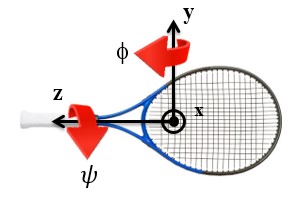}
	\caption{(Color online) A Tennis Racket with the three inertia axes $(x,y,z)$. The angles $\phi$ and $\psi$ used to define TRE describe respectively the rotation of the body around the $y$- and $z$- axes. TRE is a phenomenon in which a full turn in $\phi$- direction produces an almost perfect half-turn in $\psi$- direction.}
	\label{fig1}
\end{figure}
\emph{The Tennis Racket Effect.}-~TRE is a geometric phenomenon which does not depend on time. From Euler's equation, it can be described by the evolution of $\psi$ with respect to $\phi$~(See Sup. Sec. II):
\begin{equation}\label{eqder}
\frac{d\psi}{d\phi}=\pm\frac{\sqrt{(a+b\cos^2\psi)(c+b\cos^2\psi)}}{1-b\cos^2\psi},
\end{equation}
where we introduce the parameters $a=\frac{I_y}{I_z}-1$, $b=1-\frac{I_y}{I_x}$ and $c=\frac{2I_yH}{J^2}-1$, with the constraints $-b<c<a$, $a>0$ and $0<b<1$. $H$ and $J$ denote  respectively the rotational Hamiltonian and the angular momentum of the rigid body defined in Sup. Sec. II~\cite{Golstein50}. In the limit of a perfect asymmetric body, $I_z\ll I_y\ll I_x$, we deduce that $b\to 1$ and $a\to +\infty$. We consider only the positive values of $\frac{d\psi}{d\phi}$ defined in Eq.~\eqref{eqder}, the same analysis can be done for the negative sign. Equation~\eqref{eqder} defines a two-dimensional reduced phase space with respect to $\psi$ and $d\psi/d\phi$, as displayed in Fig.~\ref{fig5}. Note the similarity of this phase space with the one of a planar pendulum, except that two consecutive unstable fixed points are separated by $\pi$ instead of $2\pi$. The separatrix for which $c=0$ is the trajectory connecting these points~\cite{Golstein50}. We extend below the study to the complex domain and continue analytically all the functions.
\begin{figure}[!ht]
	\centering
	\includegraphics[width=1\linewidth]{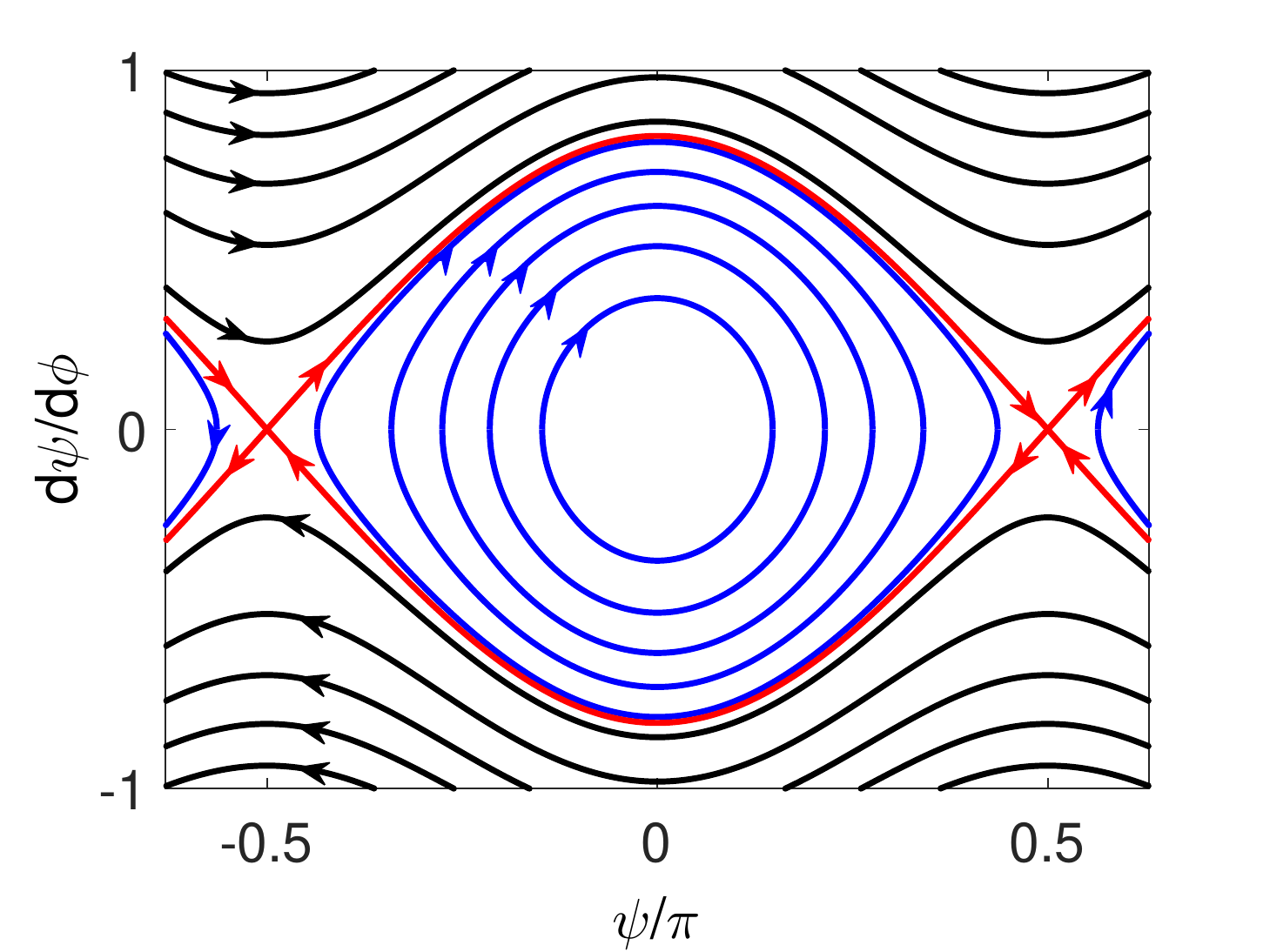}
	\caption{(Color online) Reduced phase space describing the dynamics of the rigid body in the space $(\psi,d\psi/d\phi)$. The black and
blue (dark gray) lines depict respectively the rotating and oscillating trajectories of the angular momentum. The solid red line (light gray) represents the separatrix. The parameters $a$ and $b$ are set respectively to 12 and 0.05.}
	\label{fig5}
\end{figure}

TRE is associated with a trajectory for which $\Delta\psi\simeq \pi$ when $\Delta\phi =2\pi$. We denote by $\psi_0$ and $\psi_f$ the initial and final values of the angle $\psi$. To simplify the study of TRE, we consider a symmetric configuration for which $\psi_0=-\frac{\pi}{2}+\epsilon$ and $\psi_f=\frac{\pi}{2}-\epsilon$. A perfect TRE is thus achieved in the limit $\epsilon \to 0$. Note that this symmetry hypothesis is not restrictive as shown numerically in Sup. Sec.~VI. Using Eq.~\eqref{eqder}, we obtain that the variation of $\phi$ is given by:
\begin{equation}\label{eqdeltaphi}
\Delta\phi=\int_{-\frac{\pi}{2}+\epsilon}^{\frac{\pi}{2}-\epsilon}\frac{1-b\cos^2\psi}{\sqrt{(a+b\cos^2\psi)(c+b\cos^2\psi)}}d\psi.
\end{equation}
For oscillating trajectories, the condition $c+b\cos^2\psi\geq 0$ leads to $\sin^2\epsilon\geq |\frac{c}{b}|$. From the parity of the integral and the change of variables $x=\cos^2\psi$, $\Delta\phi$ can be expressed as an incomplete elliptic integral, $\Delta\phi(\epsilon)=\int_{\sin^2\epsilon}^1\omega$, with
\begin{equation}\label{omega}
\omega=\frac{1}{b}\frac{1-bx}{\sqrt{x(x-\beta)(1-x)(x-\alpha)}}dx,
\end{equation}
where $\alpha=-\frac{a}{b}$ and $\beta=-\frac{c}{b}$. As explained in Sup. Sec.~III, we introduce a function $M$ defined by $M(u_0)=\frac{2\ln (1+\sqrt{2})}{\sqrt{1-u_0}}+2\ln (2)$ for $u_0\in ]0,1[$, and $m=M(\frac{1}{2})\simeq 3.879$. A precise description of TRE is given by Theorem~\ref{th1}, which is the main result of this study. Note that the statement is true slightly more generally for any value $u_0\in ]0,1[$, by replacing everywhere $m$ by $M(u_0)$. We put $u_0=1/2$ in Th.~\ref{th1} for simplicity.
\begin{theorem}\label{th1}
For all $c$ such that:
$$
|c|<b\exp(-2\pi\sqrt{ab}-m),
$$
for $ab$ large enough, the equation
$$
\Delta\phi_{a,b,c}(\epsilon)=2\pi
$$
has a unique solution $\epsilon_S(a,b,c)$ which verifies:
\begin{equation}\label{eqth}
\arcsin[\sqrt{|\frac{c}{b}|}]< \epsilon_S< \arcsin[\exp(-\pi\sqrt{ab}-\frac{m}{2})].
\end{equation}
This leads to:
$$
\lim_{ab\mapsto +\infty}\epsilon_S(a,b,c)=0.
$$
\end{theorem}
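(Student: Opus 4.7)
The plan is to reduce the theorem to a careful estimate of the elliptic integral defining $\Delta\phi$: monotonicity of $\epsilon\mapsto\Delta\phi(\epsilon)$ gives uniqueness, sharp two-sided asymptotics of the integral in the regime $ab\gg 1$ give the explicit bounds, and the intermediate value theorem assembles everything. For \emph{monotonicity}, I would observe that each factor under the square root in \eqref{omega} is strictly positive on $x\in(\beta,1)$: $x-\alpha=x+a/b>0$ since $a>0$, and $1-bx>0$ since $b<1$. Hence $\omega>0$, so the Leibniz rule gives
$$
\frac{d\Delta\phi}{d\epsilon}=-\omega\bigl(\sin^2\epsilon\bigr)\sin(2\epsilon)<0
$$
on $\bigl(\arcsin\sqrt{|c/b|},\pi/2\bigr)$, and $\Delta\phi(\epsilon)=2\pi$ has at most one solution.

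To \emph{peel off the $1/\sqrt{ab}$ scale}, I would write $\sqrt{x-\alpha}=\sqrt{a/b}\sqrt{1+bx/a}$; since $\sqrt{1+bx/a}\in[1,\sqrt{1+b/a}]$ on $x\in[0,1]$, this yields the sandwich
$$
\frac{J(\epsilon)}{\sqrt{ab+b^2}}\le\Delta\phi(\epsilon)\le\frac{J(\epsilon)}{\sqrt{ab}},\qquad J(\epsilon):=\int_{\sin^2\epsilon}^{1}\frac{(1-bx)\,dx}{\sqrt{x(x-\beta)(1-x)}},
$$
both enclosing functions asymptotic to $J(\epsilon)/\sqrt{ab}$ as $ab\to\infty$. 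The task then reduces to sharp two-sided control of the elementary hyperelliptic integral $J$.

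For the \emph{analysis of $J$}, I would split at a free cut-point $x=u_0\in(0,1)$. On $[u_0,1]$, the factor $1/\sqrt{x-\beta}$ is bounded by $1/\sqrt{u_0-\beta}\approx 1/\sqrt{u_0}$ (since $\beta$ is exponentially small under the hypothesis), and the substitution $x=\cos^2\varphi$ evaluates $\int_{u_0}^{1}(1-bx)\,dx/\sqrt{x(1-x)}$ in closed form to produce the constant $2\ln(1+\sqrt{2})/\sqrt{1-u_0}$---the first summand of $M(u_0)$. On $[\sin^2\epsilon,u_0]$, the factors $\sqrt{1-x}$ and $1-bx$ stay nearly constant, so the piece is driven by $\int_{\sin^2\epsilon}^{u_0}dx/\sqrt{x(x-\beta)}$, whose closed-form logarithmic antiderivative admits the asymptotic expansion $-2\ln\sin\epsilon+2\ln 2+o(1)$ in the regime $\beta\ll\sin^2\epsilon\ll 1$, supplying the $2\ln 2$ summand of $M(u_0)$.

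Combining these estimates, under the hypothesis $|c|<b\exp(-2\pi\sqrt{ab}-m)$ one shows $\Delta\phi(\arcsin\sqrt{|c/b|})>2\pi$ and, once $ab$ is large enough, $\Delta\phi(\arcsin[\exp(-\pi\sqrt{ab}-m/2)])<2\pi$. Monotonicity and the intermediate value theorem then place $\epsilon_S$ in the claimed interval, and $\arcsin[\exp(-\pi\sqrt{ab}-m/2)]\to 0$ as $ab\to\infty$ delivers the final limit. The main obstacle will be the tight tracking of additive constants: the window $|c|<b\exp(-2\pi\sqrt{ab}-m)$ is exponentially thin, so the implicit constants in the asymptotic analysis of $J$ must line up precisely into the combination $M(u_0)$. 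The free parameter $u_0$ is introduced for exactly this balancing role between the two pieces, and setting $u_0=1/2$ yields the value $m=M(1/2)\simeq 3.879$ that appears in the statement.
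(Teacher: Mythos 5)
Your strategy shares the paper's core computation but differs genuinely in the uniqueness argument, so let me compare the two and then flag a sign problem in your IVT step.

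For \emph{uniqueness}, you observe that the density of $\omega$ is strictly positive on the integration interval, so $F_{a,b,c}(u)=\int_u^1\omega$ is strictly decreasing and $\Delta\phi(\epsilon)=F_{a,b,c}(\sin^2\epsilon)$ is strictly decreasing in $\epsilon$, giving at most one solution immediately. The paper instead rewrites $F_{a,b,c}(u)=2\pi$ as the fixed-point equation $u=f(u)$ with $f(u)=e^{-2\pi\sqrt{ab}+h_{a,b,c}(u)}$ and analyzes the sign of $g'(u)$ where $g=f-\mathrm{id}$; this forces them into a case split ($c\le 0$ vs.\ $c>0$) and, for $c>0$, an argument that $g$ is first increasing then decreasing, that $g(u_m)>0$, etc. Your monotonicity observation bypasses all of that and is, I think, the cleaner route to uniqueness. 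The remaining ingredients --- peeling off the $1/\sqrt{ab}$ scale (you do it by a sandwich on $\sqrt{x-\alpha}$, the paper does it by factoring out $\sqrt{ab}$ explicitly), subtracting $-\ln u=\int_u^1 dx/x$ and bounding the remainder by $M(u_0)$ with a split at $x=u_0$, and an intermediate-value/existence argument --- are structurally the same as the paper's Proposition~1, so overall this is a recognizably equivalent proof with a simpler uniqueness step.

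There is, however, a genuine gap at the upper endpoint of your intermediate-value argument. You claim $\Delta\phi\bigl(\arcsin[\exp(-\pi\sqrt{ab}-m/2)]\bigr)<2\pi$. But the only control you (and the paper) have on the bounded piece is the two-sided bound $|h_{a,b,c}|\le m$. Plugging $u^*=e^{-2\pi\sqrt{ab}-m}$ into
$$
F_{a,b,c}(u)=\frac{h_{a,b,c}(u)-\ln u}{\sqrt{ab}}
$$
gives $F_{a,b,c}(u^*)=\dfrac{h_{a,b,c}(u^*)+2\pi\sqrt{ab}+m}{\sqrt{ab}}\ge\dfrac{-m+2\pi\sqrt{ab}+m}{\sqrt{ab}}=2\pi$, so the inequality goes the \emph{wrong} way and the IVT does not close at that endpoint. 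The endpoint that your estimate actually controls is $u^*=e^{-2\pi\sqrt{ab}+m}$, which gives $F_{a,b,c}(u^*)\le 2\pi$; equivalently the upper bound should read $\arcsin[\exp(-\pi\sqrt{ab}+m/2)]$. This is exactly what Proposition~1 of the supplement produces ($u_S<e^{-2\pi\sqrt{ab}+M(u_0)}$), so the discrepancy traces to the sign in the Theorem statement itself, which you have faithfully reproduced; nevertheless, as written, your IVT step would fail, and the proof needs the $+m/2$ endpoint to go through. Separately, be aware that your claimed closed-form evaluation on $[u_0,1]$ (producing the summand $2\ln(1+\sqrt2)/\sqrt{1-u_0}$) does not come from $\int_{u_0}^1(1-bx)\,dx/\sqrt{x(1-x)}$ alone; you still need to subtract $\int dx/x$ before integrating, as the paper does in its definition of $h_1$, otherwise the constant does not match.
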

Several questions about its existence, uniqueness and robustness are raised by the observation of TRE, all find a rigorous answer in Th.~\ref{th1}. A first fundamental comment concerns a perfect TRE which occurs only in the limit of a very asymmetric body. Such limits are common enough in physics to reveal
specific phenomena. An example is given by the adiabatic evolution in mechanics~\cite{arnold} which is also based mathematically on an asymptotic analysis. The main statement of Th.~\ref{th1} describes the asymptotic behavior of the twist defect which approximately evolves as $\epsilon\simeq e^{-\sqrt{ab}\frac{\Delta\phi}{2}}$ for a sufficiently asymmetric body (with $ab\gg 1$). This exponential evolution is connected to the instability of the fixed points and to the presence of a pole in a complex phase space. The existence of a unique symmetric configuration realizing TRE follows from this asymptotic analysis. The corresponding trajectory is closer and closer to the separatrix for more asymmetric body (i.e. $c$ goes to 0). Theorem~\ref{th1} also establishes the robustness of TRE with respect to the shape of the body. Lower and upper bounds to the twist defect are given by Eq.~\eqref{eqth} as a function of the different parameters.

These results have a geometric origin in the complex domain. We study the solution $\epsilon$ of $\Delta\phi_{a,b,c}(\epsilon)=2\pi$, where $\Delta\phi=\Delta\phi_{a,b,c}$ is given by Eq.~\eqref{eqdeltaphi}. The origin of TRE is revealed by a complexification of the problem in which $\Delta\phi$ can be interpreted as an Abelian integral over the Riemann surface of the form $\omega$~\cite{zoladek:2006}. As displayed in Fig.~\ref{fig6}, this surface has two sheets with four branch points in $x=0$, $1$, $\beta$ and $\alpha$. Branch cuts are introduced to define a single-valued function. In the limit $c\to 0$, the two branch points $x=0$ and $x=\beta$ coincide, leading to a pole whose integral is the logarithmic function. For large values of $a$, note that there is no confluence of the branch point $x=\alpha$ with $x=\beta$ or 0.
\begin{figure}[!ht]
	\centering
	\includegraphics[width=1\linewidth]{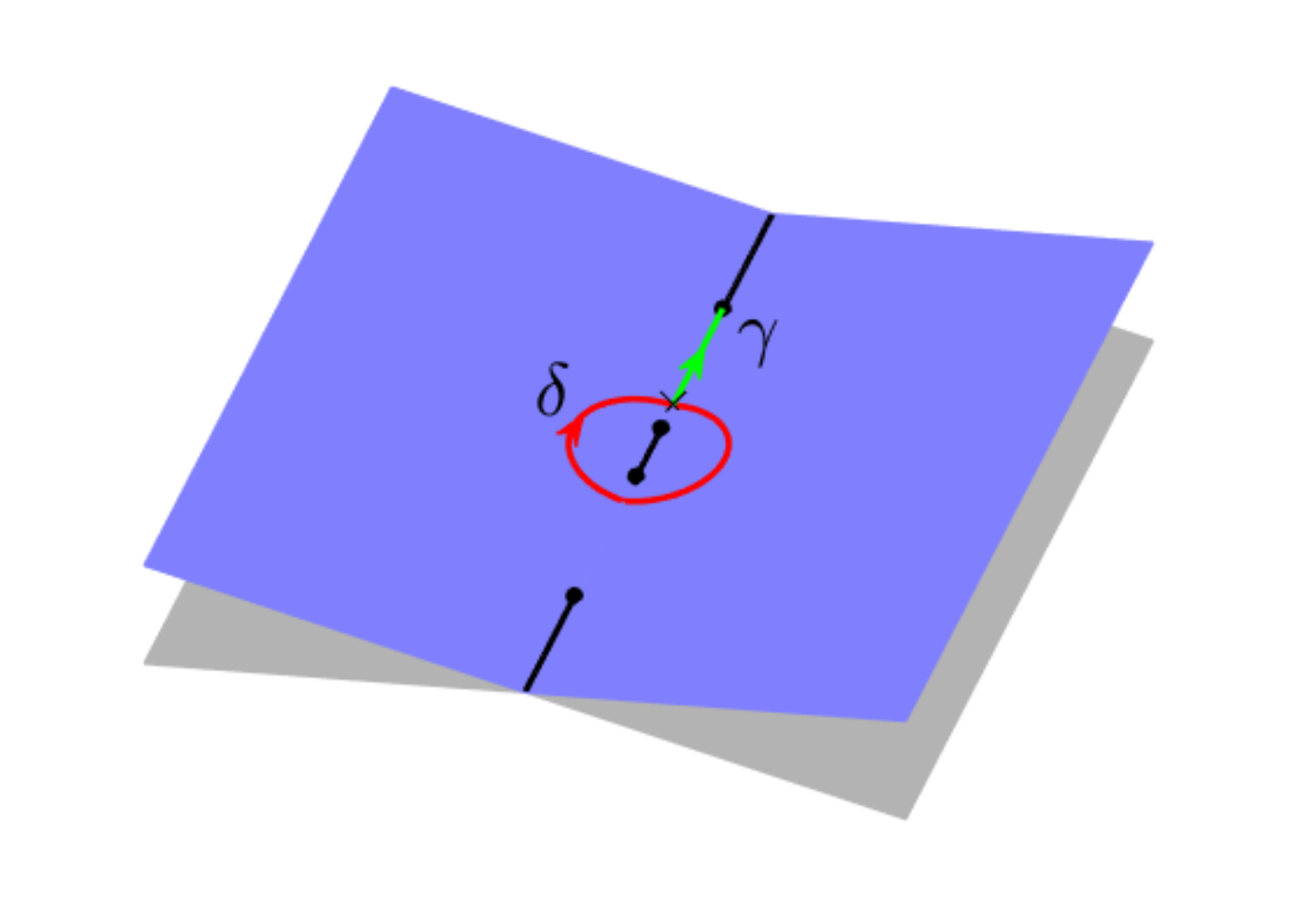}
\includegraphics[width=1\linewidth]{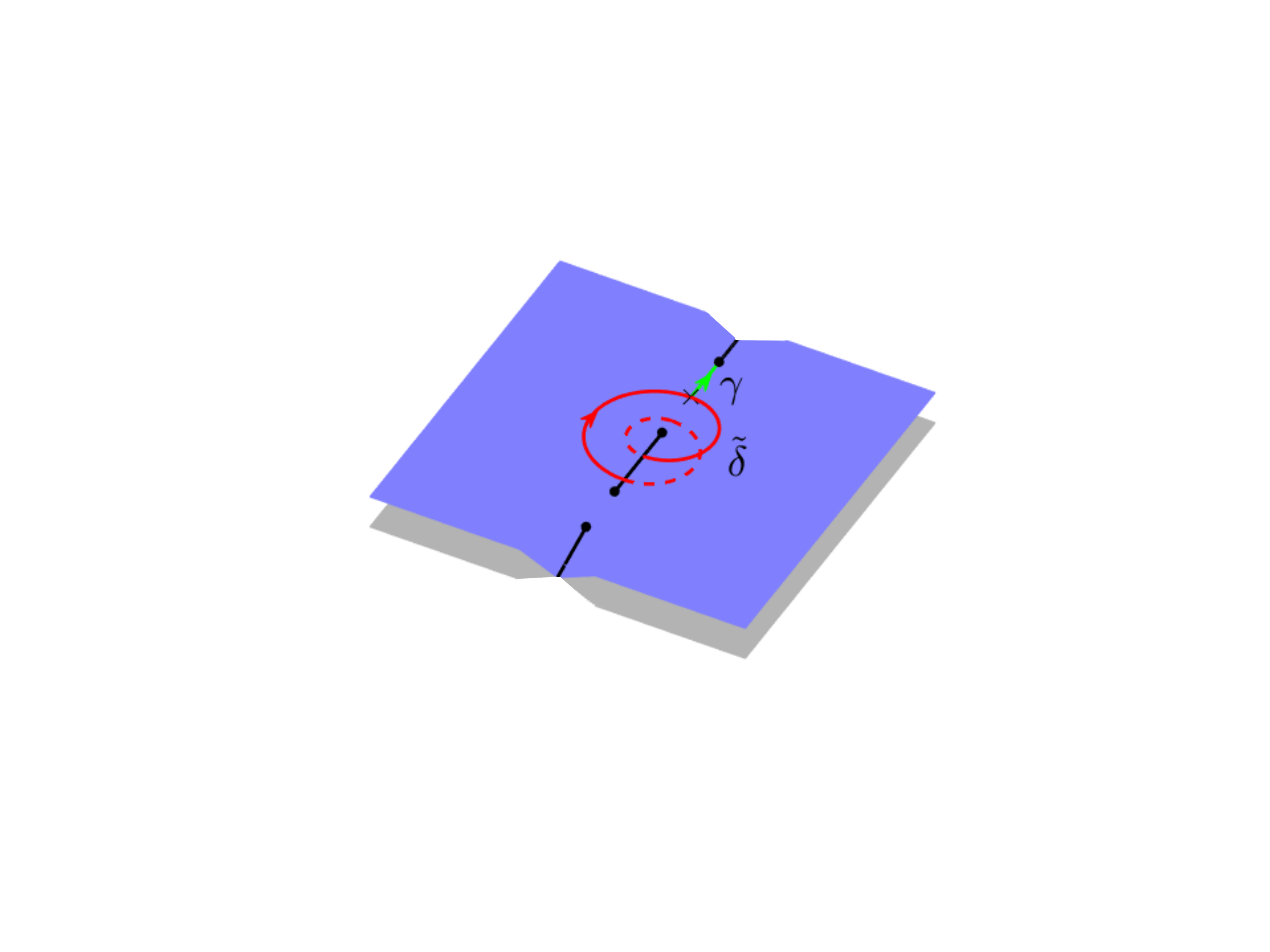}
	\caption{(Color online) Riemann surface of the form $\omega$ with the four branch points (black dots) in $x=\alpha$, $\beta$, $0$ and $1$ (from bottom to top). When $c\to 0$, the two points $x=\beta$ and $x=0$ coincide and give birth to a pole. The top and bottom panels represent the cases where the TRE can or cannot be observed. The solid straight lines represent the branch cuts of the surface. The cycles $\delta$ and $\tilde{\delta}$ are depicted by solid red (dark gray) lines. The form $\omega$ is integrated along the path $\gamma$ between the point $u$ (black cross) and the ramification point $x=1$ (green or light gray solid line). }
	\label{fig6}
\end{figure}

Let $F$ be the function defined by:
$$
F_{a,b,c}(u)=\int_{u}^{1}\omega=\int_\gamma\omega,
$$
where $\gamma$ is the integration path with $0<u<1$. We have $\Delta\phi_{a,b,c}(\epsilon)=F_{a,b,c}(\sin^2\epsilon)$. The multi-valued character of $F_{a,b,c}$ is different for $u<|\beta|$ and $u>|\beta|$. In the case $|\beta|< u <1$, we consider in the upper sheet of the Riemann surface the  cycle $\delta$ passing by $x=u$ and encircling the two branch points $\beta$ and $0$, as displayed in Fig.~\ref{fig6}. By the Picard-Lefschetz formula~\cite{AGV,zoladek:2006}, the integration contour $\gamma$ is deformed to itself plus $\delta$ when the point $x=u$ performs a loop along $\delta$. The integral $\int_\delta\omega$ adds to $F_{a,b,c}$, which reveals the multi-valued character of $F_{a,b,c}$ as a complex function. A single-valued function can be obtained by adding  a convenient multiple of $\ln u=-\int_u^1\frac{dx}{x}$, the factor being given by $\frac{1}{2\pi i}\int_\delta\omega$. In the limit $c\to 0$, $\omega$ has a pole in $x=0$ and this integral can be computed from a residue formula.

We present a heuristic proof of Th.~\ref{th1}, while a rigorous demonstration is provided in Sup. Sec.~III. We consider a simplified version of the problem where only two branch points are accounted for. We have:
$$
\int_{u}^1\frac{1}{\sqrt{x(x-\beta)}}dx=2\ln(\sqrt{x-\beta}+\sqrt{x})\big|_{x=u}^1.
$$
Using the pole at infinity, we deduce that $\frac{1}{2\pi i}\int_\delta \frac{dx}{x}=1$ and:
$$
\int_{u}^1[\frac{1}{\sqrt{x(x-\beta)}}-\frac{1}{x}]dx=2\ln\big(\frac{1+\sqrt{1-\beta}}{1+\sqrt{1-\frac{\beta}{u}}}\big),
$$
which is a well-defined and bounded function of $u$ for $|\beta|< u< 1$. As shown in Sup. Sec.~III, this argument can be generalized to $F_{a,b,c}$ which can be expressed as:
\begin{equation}\label{eqF}
F_{a,b,c}(u)=\frac{1}{\sqrt{ab}}h_{a,b,c}(u)-\frac{1}{\sqrt{ab}} \ln u,
\end{equation}
where $h_{a,b,c}$ is an analytic and bounded function in $]|\beta|,u_0[$ with $0<u_0<1$. The bound of $h_{a,b,c}$ is the function $M$ introduced in Th.~\ref{th1}. For $ab$ large enough, the equation $F_{a,b,c}(u)=2\pi$ has a unique solution which proves Th.~\ref{th1}. In the second region in which $u<|\beta|$, the geometric situation is completely different as can be seen in Fig.~\ref{fig6}. The cycle $\tilde{\delta}$ encircles only the branch point $x=0$ and no pole occurs when $c\to 0$. Turning twice around $x=0$ to get a closed path, we obtain $\int_{\tilde{\delta}}\omega =0$. This result stems from integrating the complex function $x\mapsto \frac{1}{\sqrt{x}}$ along $\tilde{\delta}$. The function $F_{a,b,c}$ is bounded with no logarithmic divergence. No information is gained about the existence, the uniqueness and the value of $\epsilon$, i.e. the possibility to realize TRE.

\emph{The Dzhanibekov effect.}-~A similar analysis can be used to describe DE~\cite{dzhanibekov}. As represented in Sup. Sec. I, the $z$- and $x$- inertia axes of this rigid body are respectively along the wings and orthogonal to the wings, while the $y$- one corresponds to the central axis of the rotation. The video~\cite{dzhanibekov} clearly shows that the motion of the wing nut is first guided by a screw which induces an almost perfect rotation around the central axis. In terms of Euler's angles, this leads to a very large angular velocity $\dot{\phi}$ and a speed $\dot{\psi}$ approximatively equal to 0 (i.e. $\frac{d\psi}{d\phi}\simeq 0$). Since the device generating the rotation of the rigid body blocks the flip motion, the angle $\psi$ is initially of the order of $\pm \frac{\pi}{2}$. We deduce that the initial point of the dynamics is very close to one of the unstable fixed points represented in Fig.~\ref{fig5}, with a parameter $c\simeq 0$. Using Eq.~\eqref{eqder}, DE is described by:
$$
\Delta\phi =\int_{-\frac{\pi}{2}}^{\frac{\pi}{2}}\frac{1-b\cos^2\psi}{\sqrt{(a+b\cos^2\psi)(c+b\cos^2\psi)}}d\psi,
$$
with $c>0$, where $\Delta\phi$ represents the angle increment before the flip of the system. We assume that the wing nut performs a perfect twist for which $\psi$ goes from $-\frac{\pi}{2}$ to $-\frac{\pi}{2}$. We show in Sup. Sec.~IV that:
\begin{equation}\label{eqdzh}
\Delta\phi= \frac{1}{\sqrt{ab}}[h_{a,b}(c)-\ln(c)],
\end{equation}
where $h_{a,b}$ is a bounded function when $c\to 0$. In this limit, the logarithmic divergence of $\Delta \phi$ occurs with the confluence of the two branch points in $x=\beta$ and $x=0$, which gives a pole as in TRE. Consequently, the speed $d\phi/d\psi$ increases tremendously in the neighborhood of this point. Note that the parameter $c$ for DE plays the same role as $\epsilon$ for TRE as can be seen in Eq.~\eqref{eqF} and \eqref{eqdzh}. DE with many rotations around the intermediate axis can be observed for a sufficient small positive value of $c$. We stress that the number of turns does not need to be complete.

\emph{The Monster Flip.}-~This approach can be used for a skate board where the $z$- and $y$- inertia axes are respectively orthogonal and parallel to the wheel axis, while the $x$- axis is orthogonal to the board (see Sup. Sec. I). MFE corresponds to a complete turn around the transverse axis together with a small variation of $\psi$. It can be realized in a neighborhood of the unstable point where $\frac{d\psi}{d\phi}=0$ (i.e. $\frac{d\phi}{d\psi}=\infty$). We search for a solution $\epsilon$ close to zero of $\tilde{\Delta}\phi(\epsilon)=2\pi$ where
\begin{equation}\label{monstereq}
\tilde{\Delta}\phi(\epsilon)=2\int_{\psi_i}^{\frac{\pi}{2}+\epsilon}\frac{1-b\cos^2\psi}{\sqrt{a+b\cos^2\psi}\sqrt{c+b\cos^2\psi}}d\psi,
\end{equation}
with $\psi_i=\pi/2$ and $\psi_i=\pi/2+\arcsin[\sqrt{|\beta|}]$ for rotating and oscillating trajectories, respectively.
As in TRE, we get $\tilde{\Delta}\phi(\epsilon)=\int_{\cos^2\psi_i}^{\sin^2\epsilon}\omega$, where $\omega$ is defined by Eq.~\eqref{omega}. Introducing $\tilde{F}_{a,b,c}(u)=\int_{\cos^2\psi_i}^{u}\omega$, it can be shown in the region $|\beta|<u<1$ that (see Sup. Sec.~V):
$$
\tilde F_{a,b,c}(u)=\frac{1}{\sqrt{ab}}\tilde h_{a,b,c}(u)+\frac{1}{\sqrt{ab}}\ln(u),
$$
where $\tilde{h}_{a,b,c}$ is a bounded and single-valued function. Note the change of sign in front of the logarithmic term with respect to Eq.~\eqref{eqF}. The solution of $\tilde{\Delta}\phi_{a,b,c}=\tilde F_{a,b,c}(u)$ can be approximated as $\epsilon \simeq\frac{\sqrt{|\beta|}}{2}e^{\pi \sqrt{ab}}$.
The accuracy of this approximation is shown numerically in Sup. Sec.~VI. For a body with $ab\geq 1$, MFE can be observed only in a neighborhood of the separatrix where $|\beta|\ll 1$. The rotation of the skate board around its transverse axis is constrained by the condition $\epsilon\geq \sqrt{|\beta|}$. This result quantifies the difficulty of performing MFE. For an angle $\epsilon$ of 30 degrees, this leads for a standard skate board to $c\simeq 10^{-3}$, while the maximum value of $c$ is of the order of 10. Finally, as illustrated in Sup. Sec.~V, MFE cannot be realized in the second region $u<|\beta|$.

\emph{Conclusion.}- TRE originates from a pole of a Riemann surface and a perfect twist of the head of the racket occurs in the limit of an ideal asymmetric body. Different properties such as the robustness of the effect have been derived from this geometric analysis. As a byproduct, we have described DE and established why the MFE is so difficult to perform. This study paves the way for the analysis of other classical integrable systems and strongly suggests the importance of complex geometry beyond the cases studied in this paper. An intriguing question is to transpose this effect to the quantum world. Different molecular systems could show traces of
this effect~\cite{RMP,rotationnumber}. Another field of applications is the control of quantum systems by external electromagnetic fields~\cite{glaser15} using, e.g., the analogy between Bloch and Euler equations~\cite{QTRE}.\\
\noindent\textbf{Acknowledgment}\\
This work was supported by the EUR-EIPHI Graduate School (Grant No. 17-EURE-0002)

\newpage

{\center\textbf{\Large{Supplemental material:\\ Geometric Origin of the Tennis Racket Effect}}}
\vspace{1cm}

This supplementary material gives a theoretical description of the Tennis Racket Eﬀect (TRE) and presents numerical simulations for different rigid bodies. The twist of TRE is schematically represented in Fig.~\ref{figS0}.

This work is organized
as follows. Section~\ref{sec1} describes the model system and the different set of numerical parameters. Standard results of rotational dynamics are recalled in Sec.~\ref{sec2}. The Euler angles used in this study are described. Using Euler's equations, we show how Eq.~(1) of the main text can be derived. Section~\ref{sec3} focuses on the proof of Theorem~I of the main text. A similar approach is applied in Sec.~\ref{secdzh} and Sec.~\ref{secMF} to describe respectively the Dzhanibekov effect and the Monster Flip effect.
Numerical results are presented in Sec.~\ref{secnum}.
\begin{figure}[!ht]
	\centering
	\includegraphics[width=0.7\linewidth]{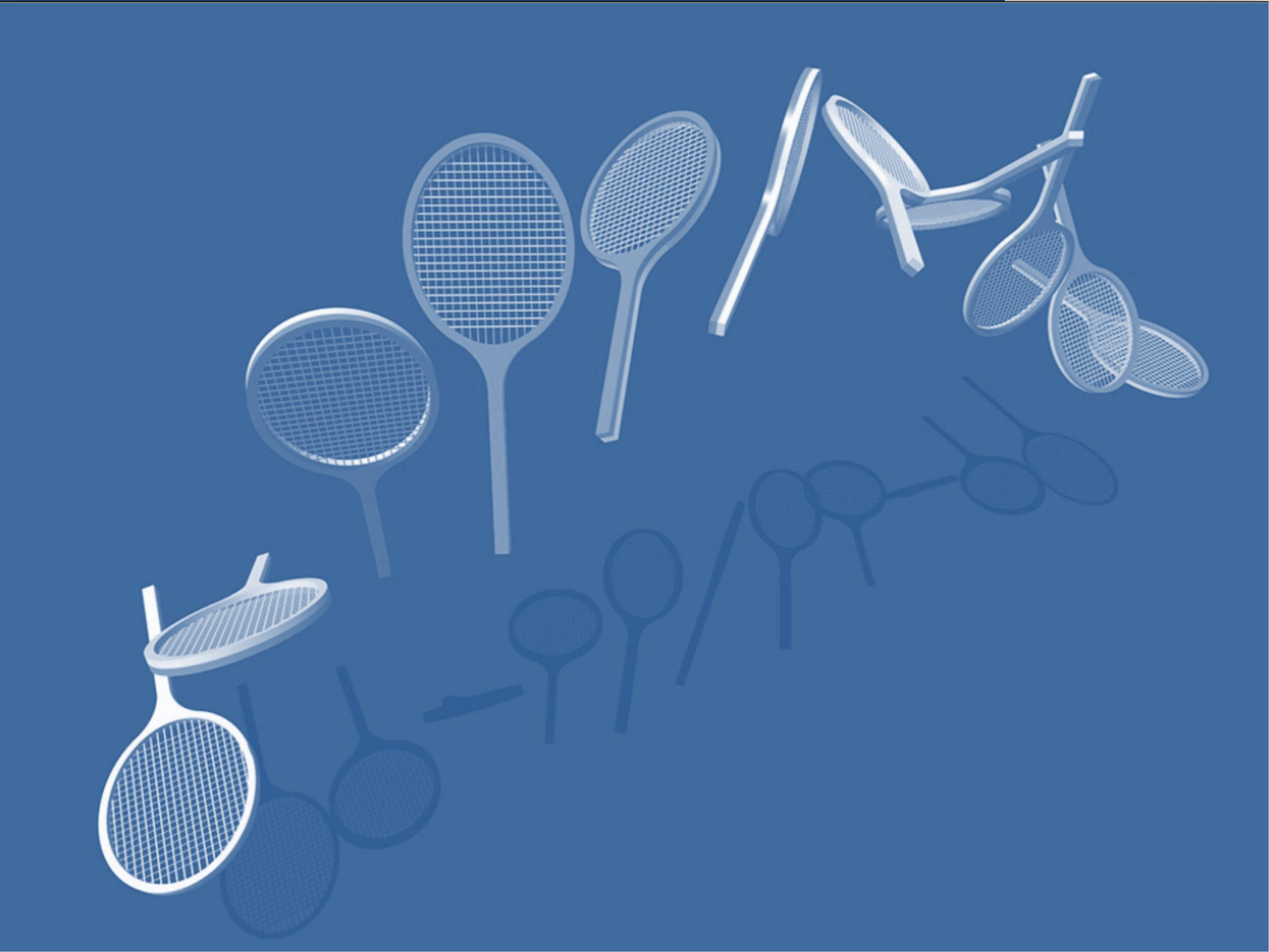}
	\caption{(Color online) Illustration of the Tennis Racket Effect: The head of the racket performs a $\pi$- flip when the handle makes a $2\pi$- rotation.}
	\label{figS0}
\end{figure}

\section{The model system}\label{sec1}
This paragraph gives some details about the different model systems used in this paper. We recall, in particular, how the moments of inertia can be estimated for different rigid bodies.

The direction of the inertia axes of a standard tennis racket is represented in Fig.~1 of the main text. Figures~\ref{figS20} and \ref{figS2} display the inertia axes of a wing nut and a skate board.


\begin{figure}[!ht]
	\centering
	\includegraphics[width=0.7\linewidth]{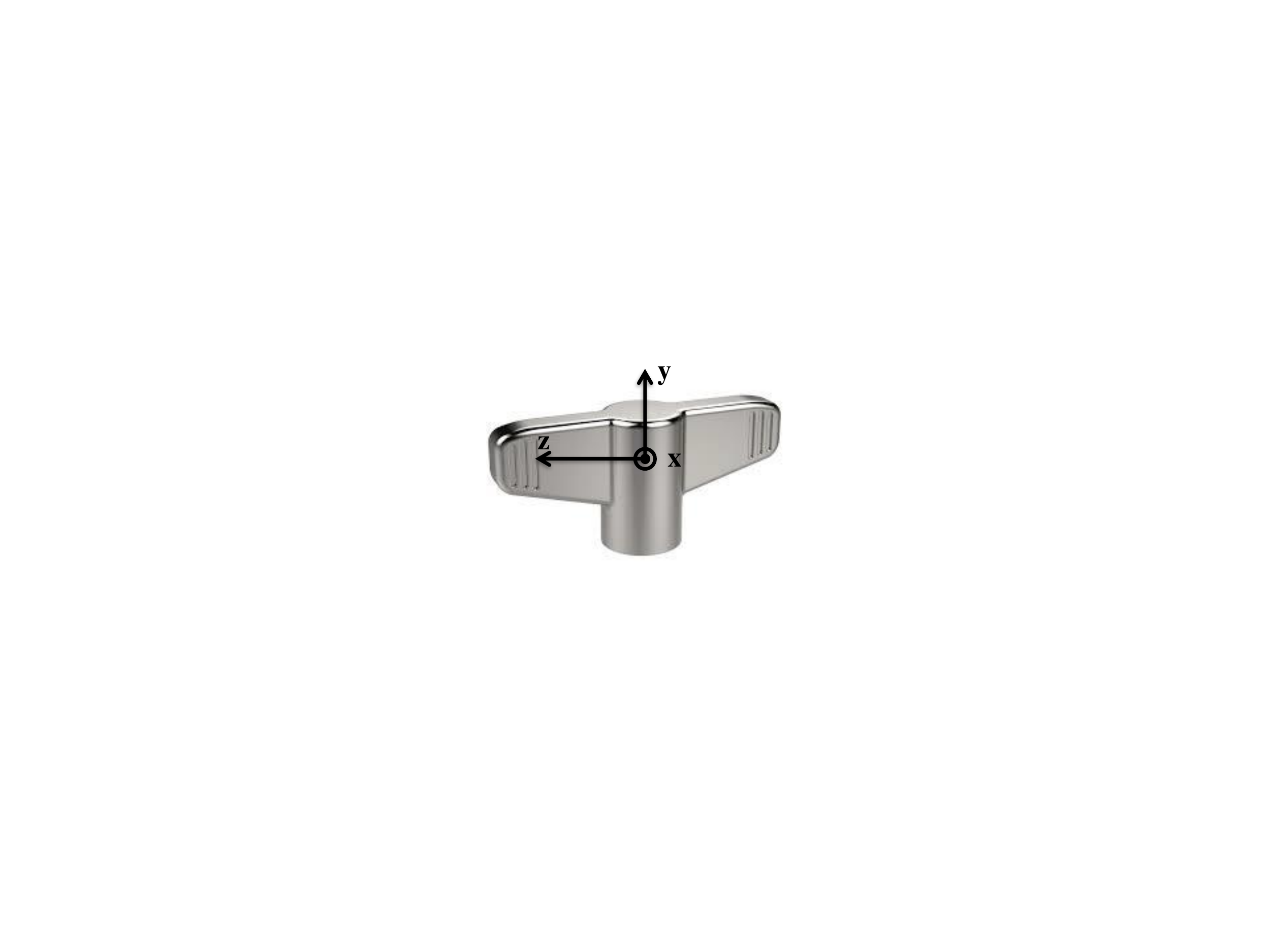}
	\caption{(Color online) A wing nut with the definition of the inertia axes ($x$, $y$, $z$). The intermediate axis is the central axis of the wing nut, while the axes with the smallest and largest moments of inertia are respectively along and orthogonal to the wings.}
	\label{figS20}
\end{figure}

\begin{figure}[!ht]
	\centering
	\includegraphics[width=0.7\linewidth]{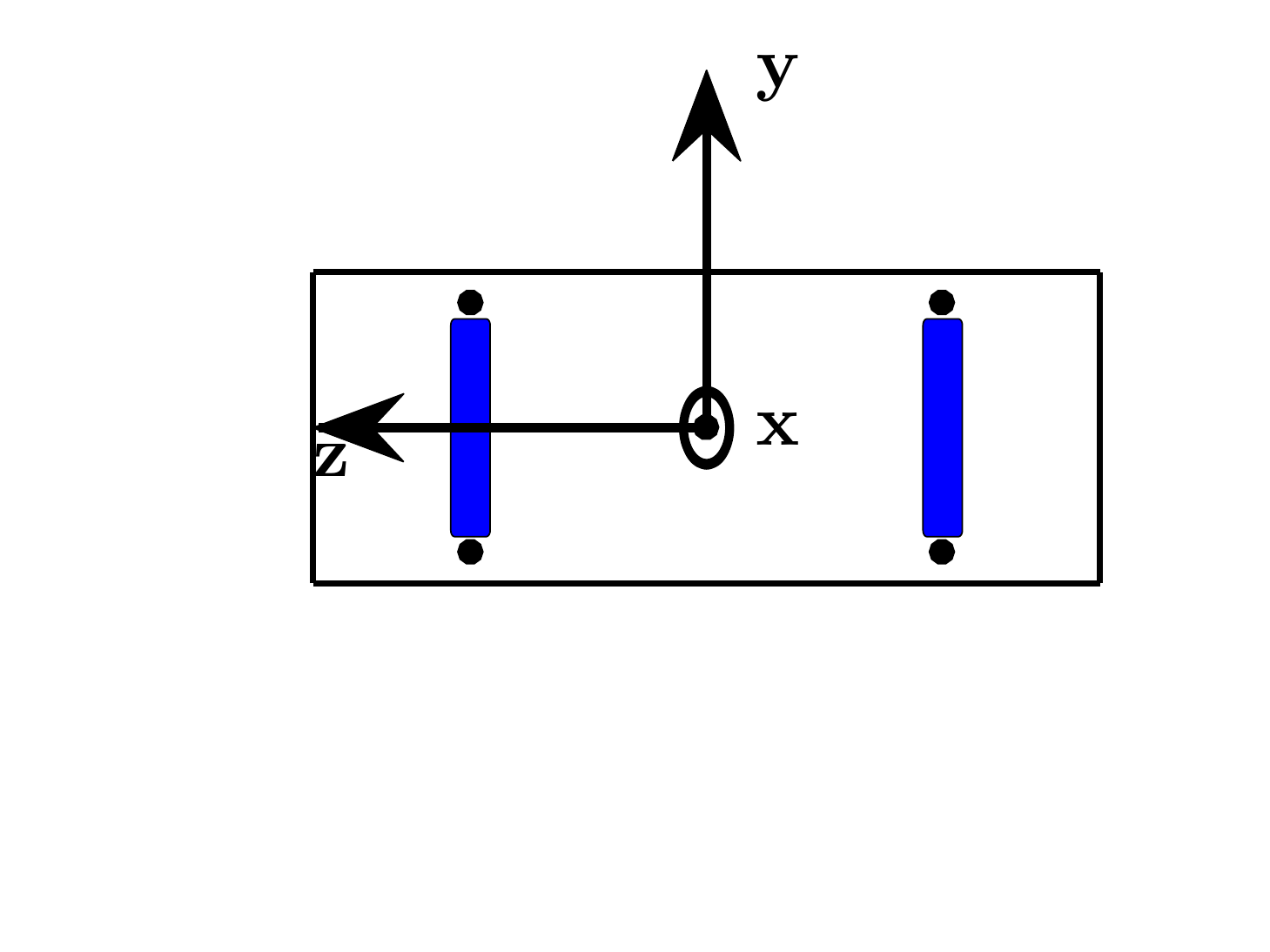}
	\caption{(Color online) Schematic representation at the scale of a skate board with the definition of the inertia axes ($x$, $y$, $z$). The big dots indicate the position of the wheels and the blue rectangles the trucks.}
	\label{figS2}
\end{figure}

The Tennis Racket Effect can also be observed with a book or a mobile phone. If the object of mass $m$ is a homogeneous rectangular cuboid of height $h$, length $L$ and width $l$, with $h<l<L$ then the moments of inertia are given by:
\begin{equation*}
\begin{cases}
I_x = \frac{m}{12}(L^2+l^2)\\
I_y = \frac{m}{12}(L^2+h^2)\\
I_z = \frac{m}{12}(h^2+l^2)\\
\end{cases}
\end{equation*}
We deduce that:
\begin{eqnarray*}
a = \frac{L^2-l^2}{h^2+l^2}\\
b = \frac{l^2-h^2}{L^2+l^2}\\
\end{eqnarray*}
If the object is almost flat, the height will be very small with respect to the other dimensions. In this case, the intermediate axis is in the plane of the object and perpendicular to the largest side. Numerical values are given in Tab.~\ref{tab1}.
\begin{table}[tb]
\caption{Numerical values of the parameters $a$ and $b$ for different objects. The book is the book of mechanics by Goldstein. The mobile phone is a Samsung JS. The moments of inertia of the wing nut and a tennis racket are given in~\cite{dzhanibekovref} and~\cite{tennispara}.\label{tab1}}
\begin{tabular}{|c|c|c|c|}
\hline
Object & $a$ & $b$ & $ab$\\
\hline
\hline
Racket & 12.54 & 0.06 & 0.75\\
\hline
Book & 1.11 & 0.31 & 0.34\\
\hline
Mobile phone & 2.97 & 0.198 & 0.59\\
\hline
Wing nut & 2.92 & 0.0972 & 0.28\\
\hline
Skate board & 8.82 & 0.078 & 0.69\\
\hline

\end{tabular}
\end{table}
The computation is more involved for a skate board since the wheels and the truck have to be accounted for. We consider the mass repartition given in Fig.~\ref{figS2} of a skate of length $L=80$~cm, width $l=20$~cm and height $h=5$~cm. The masses of the board, a wheel and a truck are estimated to be respectively of the order of $500$~g, $200$~g and $350$~g, which leads to a total mass of 2~kg. We obtain $I_x=0.123$~kg.m$^2$, $I_y=0.113$~kg.m$^2$ and $I_z=0.012$~kg.m$^2$.

\section{Euler equation for rotational motion}\label{sec2}
As mentioned in the main text, the position of the body-fixed frame $(x,y,z)$ with respect to the space-fixed frame $(X,Y,Z)$ can be described by three Euler angles $(\theta,\phi,\psi)$, which characterize the motion of the rigid body. The definition of the Euler angles is shown in Fig.~\ref{fig22}.
\begin{figure}[!ht]
	\centering
	\includegraphics[width=1\linewidth]{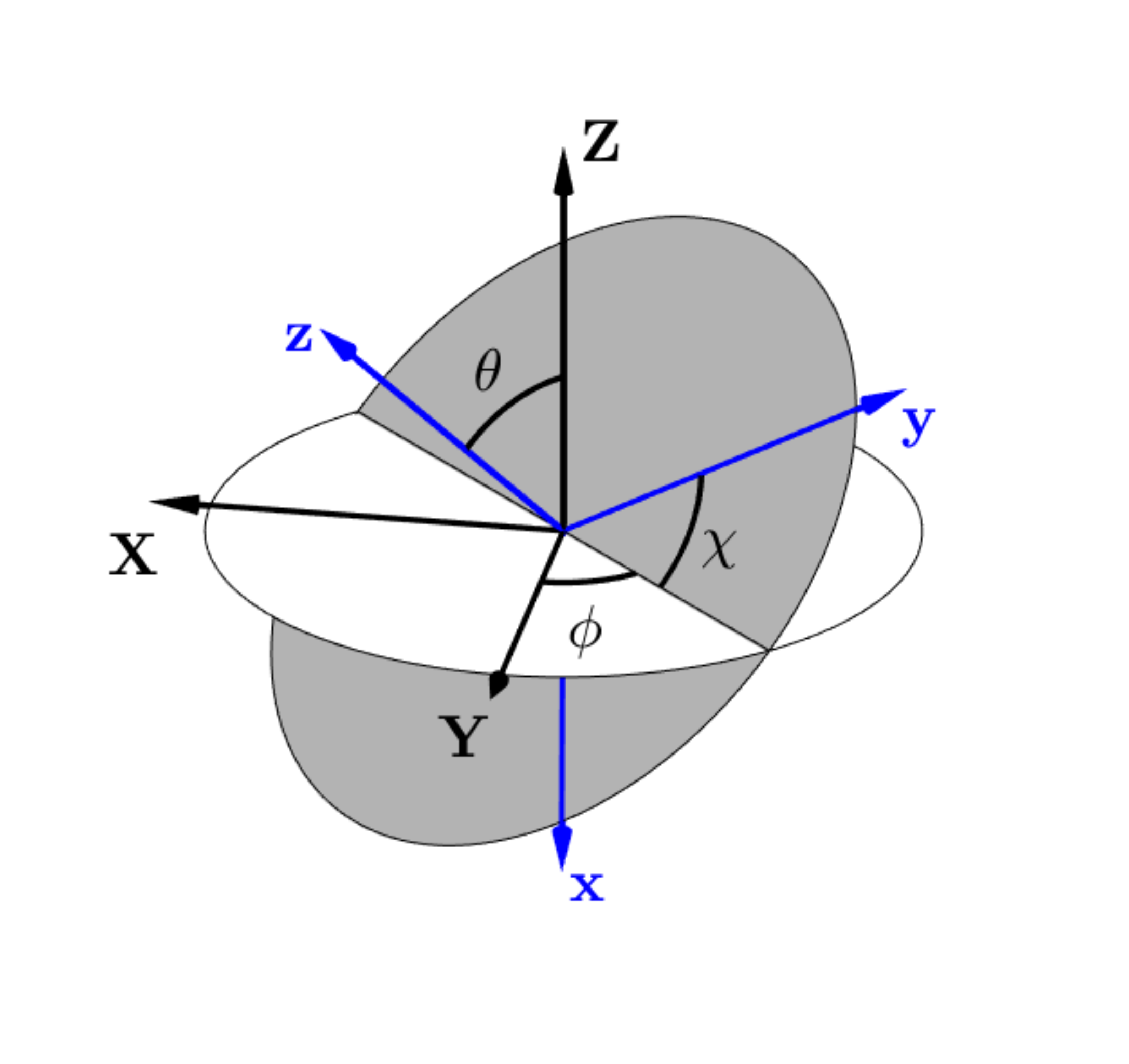}
	\caption{(Color online) Definition of the Euler angles used to describe the position
of the body-fixed frame $(x,y,z)$ with respect to the space-fixed frame $(X,Y,Z)$. The direction of the conserved angular momentum $\textbf{J}$ is also indicated.}
	\label{fig22}
\end{figure}
Rotational motion is described by integrable dynamics which have two constants of the motion, namely the angular momentum $\textbf{J}$ and the Hamiltonian $H$. The $Z$- axis of the space-fixed frame is usually chosen along the direction of $\textbf{J}$. In the body-fixed frame, the components of $\textbf{J}$ can be expressed as $J_x=-J\sin\theta\cos\psi$, $J_y=J\sin\theta\sin\psi$ and $J_z=J\cos\theta$, where $J$ is the modulus of $\textbf{J}$, while $H$ is given by $H=\frac{J_x^2}{2I_x}+\frac{J_y^2}{2I_y}+\frac{J_z^2}{2I_z}$. In the $(J_x,J_y,J_z)$- space, the energy of the intermediate axis $\frac{J^2}{2I_y}$ defines the position of the separatrix which connects the unstable fixed points and is also the boundary between the rotating and oscillating trajectories for $2I_yH>J^2$ and $2I_yH<J^2$, respectively. Using the angular velocities $\Omega_k=J_k/I_k$, $k=x,y,z$, it follows that the Euler angles satisfy the Euler differential system:
\begin{equation}\label{eqeuler}
\begin{gathered}
\begin{aligned}
& \dot{\theta}=J(\frac{1}{I_y}-\frac{1}{I_x})\sin\theta\sin\psi\cos\psi \\
& \dot{\phi}=J(\frac{\sin^2\psi}{I_y}+\frac{\cos^2\psi}{I_x}) \\
& \dot{\psi}=J(\frac{1}{I_z}-\frac{\sin^2\psi}{I_y}-\frac{\cos^2\psi}{I_x})\cos\theta
\end{aligned}
\end{gathered}
\end{equation}
We introduce the parameters $a=\frac{I_y}{I_z}-1$, $b=1-\frac{I_y}{I_x}$ and $c=\frac{2I_yH}{J^2}-1$, with the constraint $-b<c<a$. Note that $c$ measures the signed distance to the separatrix. For a standard tennis racket, we have $a=12.53$ and $b=0.063$~\cite{tennispara} while for a skate board the parameters are $a=8.82$ and $b=0.078$ (see Sec.~\ref{sec1}). Using Eq.~\eqref{eqeuler}, we can describe the dynamics and the TRE in terms of the evolution of $\psi$ with respect to $\phi$. We have:
\begin{equation}
\frac{d\psi}{d\phi}=\frac{(a+b\cos^2\psi)\cos\theta}{1-b\cos^2\psi}.
\end{equation}
From $c=a-\sin^2\theta (a+b\cos^2\psi)$, we arrive at:
\begin{equation}
\cos\theta=\pm\sqrt{\frac{c+b\cos^2\psi}{a+b\cos^2\psi}},
\end{equation}
which leads to:
\begin{equation}\label{eqder0}
\frac{d\psi}{d\phi}=\pm\frac{\sqrt{(a+b\cos^2\psi)(c+b\cos^2\psi)}}{1-b\cos^2\psi}.
\end{equation}
Equation~\eqref{eqder0} is the starting point of the analysis of the main text.

\section{Geometric proof of the Tennis Racket Effect}\label{sec3}
We show rigorously in this section the different statements about the Tennis Racket Effect.
\subsection{Analysis of the region $|\beta|< |u|< 1$}
We study in this paragraph the function $F_{a,b,c}$ defined in Eq.~(5) of the main text. From the geometric analysis in the complex space, the Picard-Lefschetz formula states that the function $F_{a,b,c}$ can be expressed in the complex domain $\mathcal{A}=\{u\in\mathbb{C}: |\beta|<|u|<1\}$ as:
$$
F_{a,b,c}(u)=g_{a,b,c}(u)-k_{a,b,c}(u)\ln u,
$$
where $g$ and $k$ are two holomorphic functions on the annulus $\mathcal{A}$, which are uniformly bounded and continuous on the closed annulus $\bar{\mathcal{A}}$. Note that $k_{a,b,c}$, which is given by $\frac{1}{2\pi i}\int_\delta \omega$, can be determined in the limit $c\to 0$ from a residue computation. However, in this example, a better upper bound and a precise expression can be derived respectively for $g_{a,b,c}$ and $k_{a,b,c}$ by considering real integrals.\\
\textbf{Analysis of $F_{a,b,c}$ in the real case:}\\
The function $F_{a,b,c}$ can be expressed as:
$$
F_{a,b,c}(u)=\frac{1}{\sqrt{ab}}(h_1(u)+h_2(u))-\frac{\ln(u)}{\sqrt{ab}},
$$
with
$$
h_1(u)=\int_{u}^1(\frac{1}{\sqrt{x(x-\beta)}}-\frac{1}{x})\frac{1-bx}{\sqrt{(1-x)(1-\frac{x}{\alpha})}}dx,
$$
and
$$
h_2(u)=\int_u^1\frac{dx}{x}[\frac{1-bx}{\sqrt{(1-x)(1-x/\alpha)}}-1].
$$
We first determine a bound in the real domain of the $h_1$- function. Let $0<u_0<1$ such that $0\leq |\beta|<u\leq u_0<1$. Since
$$
\frac{1-bx}{\sqrt{(1-x)(1+xa/b)}}\leq \frac{1}{\sqrt{1-u_0}},
$$
for $x\in ]0,u_0]$, we deduce that:
$$
|h_1(u)|\leq \frac{1}{\sqrt{1-u_0}}\int_u^1 |\frac{1}{\sqrt{x(x-\beta)}}-\frac{1}{x}|dx.
$$
which gives
$$
|h_1(u)|\leq \frac{1}{\sqrt{1-u_0}}|\int_u^1 \big(\frac{1}{\sqrt{x(x-\beta)}}-\frac{1}{x}\big) dx |.
$$
because the sign of the integrand does not change in $]|\beta|,u_0]$. As in the simplified case of the main text, we use the fact that:
$$
\int_u^1\big( \frac{1}{\sqrt{x(x-\beta)}}-\frac{1}{x}\big) dx=2\ln \big(\frac{1+\sqrt{1-\beta}}{1+\sqrt{1-\beta/u}}\big),
$$
and we arrive at
$$
|\int_u^1 \big(\frac{1}{\sqrt{x(x-\beta)}}-\frac{1}{x}\big) dx |\leq 2\ln\big(\frac{1+\sqrt{1+|\beta|}}{1+\sqrt{1-|\beta|}}\big)\leq 2\ln (1+\sqrt{2})
$$
Finally, we have:
$$
|h_1(u)|\leq \frac{2\ln (1+\sqrt{2})}{\sqrt{1-u_0}}.
$$
In a second step, we analyze the $h_2$- function. We have:
$$
|h_2(u)|\leq \int_u^1 \frac{dx}{x}|\frac{1-bx-\sqrt{(1-x)(1-x/\alpha)}}{\sqrt{(1-x)(1-x/\alpha)}}|\leq \int_u^1\frac{dx}{x}\frac{1-\sqrt{1-x}}{\sqrt{1-x}},
$$
which is valid for $ab$ large enough. The upper bound of $h_2$ can be exactly integrated:
$$
|h_2(u)|\leq \big| [-2\ln(1+\sqrt{1-x})]_u^1\big| \leq 2\ln (2)
$$
We finally get:
$$
F_{a,b,c}(u)=\frac{h_{a,b,c}(u)}{\sqrt{ab}}-\frac{\ln u}{\sqrt{ab}},
$$
where $h_{a,b,c}=h_1+h_2$ is a bounded function
with
$$
|h_{a,b,c}(u)|\leq \frac{2\ln (1+\sqrt{2})}{\sqrt{1-u_0}}+2\ln (2),
$$
which is the bound used in the main text. Note that the bound on $h_{a,b,c}$ does not depend on $a$, $b$ and $c$ but only on a fixed parameter $u_0$ which can be chosen at will in $]0,1[$. We denote by $M$ the function defined by $M(u_0)=\frac{2\ln (1+\sqrt{2})}{\sqrt{1-u_0}}+2\ln (2)$ for $u_0\in ]0,1[$. The derivative $h'_{a,b,c}$ of $h_{a,b,c}$ is given by the corresponding integrands of $h_1$ and $h_2$, leading to:
$$
h'_{a,b,c}(u)=\frac{1}{u}-\frac{1}{\sqrt{u(u+c/b)}}\frac{1-bu}{\sqrt{(1-u)(1+ub/a)}}.
$$
\begin{proposition}\label{propTRE}
For all $u_0\in ]0,1[$, for all $c$ such that
\begin{equation}\label{condgamma}
0\leq |c|<be^{-2\pi\sqrt{ab}-M(u_0)},
\end{equation}
for $ab$ large enough, the equation
\begin{equation}\label{Fpi}
F_{a,b,c}(u)=2\pi
\end{equation}
has a unique solution $u=u_S(a,b,c)$ in $]|\frac{c}{b}|,u_0[$, which verifies:
$$
|\frac{c}{b}|< u_S< e^{-2\pi\sqrt{ab}+M(u_0)}
$$
and, in particular,
$$\lim_{ab\mapsto+\infty}u_S(a,b,c)=0.$$
\end{proposition}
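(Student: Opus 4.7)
The plan is to rewrite the equation $F_{a,b,c}(u)=2\pi$ in the fixed-point form
\[
-\ln u = 2\pi\sqrt{ab}-h_{a,b,c}(u),
\]
and then, relying on the uniform bound $|h_{a,b,c}(u)|\le M(u_0)$ established just above together with strict monotonicity of $F_{a,b,c}$, extract existence, uniqueness, and the bracketing of $u_S$ by a short intermediate-value-plus-monotonicity argument.

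For existence I would apply the intermediate value theorem on $[|c/b|,u_0]$, noting that $F_{a,b,c}$ extends continuously to the left endpoint since any potential singularity of $\omega$ at $x=\beta$ is integrable and $h_1,h_2$ satisfy their bounds uniformly down to $|\beta|=|c/b|$. The hypothesis $|c|<be^{-2\pi\sqrt{ab}-M(u_0)}$ is exactly $-\ln|c/b|>2\pi\sqrt{ab}+M(u_0)$, so
\[
\sqrt{ab}\,F_{a,b,c}(|c/b|) = h_{a,b,c}(|c/b|)-\ln|c/b| \ge -M(u_0)+2\pi\sqrt{ab}+M(u_0) = 2\pi\sqrt{ab},
\]
forcing $F_{a,b,c}(|c/b|)>2\pi$ strictly. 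At the other endpoint $\sqrt{ab}\,F_{a,b,c}(u_0)=h_{a,b,c}(u_0)-\ln u_0$ is bounded independently of $a,b,c$, so $F_{a,b,c}(u_0)<2\pi$ for $ab$ large enough, and continuity yields some $u_S\in\,]|c/b|,u_0[$ with $F_{a,b,c}(u_S)=2\pi$.

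For uniqueness I would differentiate, using the explicit formula for $h'_{a,b,c}$ recalled in the excerpt:
\[
F'_{a,b,c}(u) = \frac{h'_{a,b,c}(u)-1/u}{\sqrt{ab}} = -\frac{1}{\sqrt{ab}}\cdot\frac{1-bu}{\sqrt{u(u+c/b)(1-u)(1+ub/a)}}.
\]
On $]|c/b|,u_0[\,\subset\,]0,1[$ the factor $u+c/b$ is strictly positive, which is precisely what the defining constraint of the interval forces (since $|c/b|=|\beta|$); the factor $1-bu$ is positive because $b<1$ and $u<1$; and the remaining factors under the root are positive. Hence $F'_{a,b,c}<0$ throughout, $F_{a,b,c}$ is strictly decreasing, and $u_S$ is unique.

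Finally, inserting $u=u_S$ into $-\ln u_S=2\pi\sqrt{ab}-h_{a,b,c}(u_S)$ and applying $|h_{a,b,c}(u_S)|\le M(u_0)$ gives $-\ln u_S\ge 2\pi\sqrt{ab}-M(u_0)$, i.e.\ $u_S<e^{-2\pi\sqrt{ab}+M(u_0)}$; the lower bound $|c/b|<u_S$ is built into the existence interval, and $u_S\to 0$ as $ab\to\infty$ follows from the upper estimate. The main delicacy lies not in any of these elementary steps but in the input used to carry them out: the decomposition $F_{a,b,c}=(h_{a,b,c}-\ln u)/\sqrt{ab}$ isolating the logarithmic singularity, and the uniformity of the bound on $h_{a,b,c}$ all the way down to the boundary $u=|\beta|$. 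Once those facts (already established via the separate estimates on $h_1$ and $h_2$) are in hand, the proposition reduces to the routine argument above.
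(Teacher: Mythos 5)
Your proof is correct, and your uniqueness argument is genuinely simpler than the paper's. The paper casts the equation as a fixed-point problem $u=f(u)$ with $f(u)=e^{-2\pi\sqrt{ab}+h_{a,b,c}(u)}$ and then studies the monotonicity of $g(u)=f(u)-u$; since $g'(u)=h'_{a,b,c}(u)f(u)-1$ and $h'_{a,b,c}$ can blow up to $+\infty$ near $u=|c/b|$ when $c>0$, the paper is forced into a case analysis ($c\le 0$ versus $c>0$, with a critical point $u_m$ in the second case). You instead observe that $F'_{a,b,c}(u)=-\omega(u)$, which is manifestly negative on $]|c/b|,u_0[$ because every factor under the square root is positive there and $1-bu>0$; this kills the sign of $c$ as a variable and gives strict monotonicity of $F_{a,b,c}$ in one line. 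Your existence step via the intermediate value theorem on $[|c/b|,u_0]$, the endpoint evaluations, and the final bracketing $-\ln u_S=2\pi\sqrt{ab}-h_{a,b,c}(u_S)$ with $|h_{a,b,c}|\le M(u_0)$ all match the paper's inputs (including the requirement of $ab$ large enough, which enters through the bound on $h_2$ and the inequality at $u_0$). The one small caveat is that this last step yields $u_S\le e^{-2\pi\sqrt{ab}+M(u_0)}$ a priori; strictness comes from the fact that the bound $|h_{a,b,c}|\le M(u_0)$ is never attained, a point the paper also leaves implicit. The paper remarks that the fixed-point phrasing was chosen because it directly exhibits the bracketing interval and suggests a contraction-style robustness statement, but since you recover the same interval from the same estimate on $h_{a,b,c}$, that advantage is purely presentational; your monotonicity argument is the cleaner route to uniqueness.
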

\begin{proof}
Equation \eqref{Fpi} becomes:
\begin{equation}\label{TRE}
\frac{1}{\sqrt{ab}}h_{a,b,c}(u)-\frac{1}{\sqrt{ab}}\ln u=2\pi
\end{equation}
Equation~\eqref{TRE} can be expressed in terms of a fixed point problem $u=f(u)$, with
$$
f(u)=e^{-2\pi\sqrt{ab}+h_{a,b,c}(u)}.
$$
We arrive at:
\begin{equation}\label{inequality}
e^{-2\pi\sqrt{ab}-M(u_0)}<f(u)<e^{-2\pi\sqrt{ab}+M(u_0)}.
\end{equation}
We show by continuity the existence of a solution to the fixed point problem if $f(|\beta|)>|\beta|$ and $f(u_0)<u_0$. The first condition is given by Eq.~\eqref{condgamma} while the second inequality is trivially verified from Eq.~\eqref{inequality}, for $ab$ large enough. The uniqueness of the solution is verified if the function $g:~u\mapsto f(u)-u$ is strictly decreasing. We show this statement for $c\leq 0$, while for $c>0$, we prove that $g$ is increasing on $[|\beta|,u_m[$, it reaches its maximum in $u=u_m$ and is strictly decreasing on $]u_m,u_0[$.

Let us first consider the case $c\leq 0$. The function $h'_{a,b,c}$ can be bounded for $u\in ]|\beta|,u_0]$ by:
$$
h'_{a,b,c}(u)\leq \frac{1}{u}(1-\frac{1-bu}{\sqrt{(1-u)(1+ub/a)}})\leq t(u)
$$
where
$$
t(u)=\frac{1}{u}(1-(1-u)^{-1/2}).
$$
Since $\lim_{u\to 0}t(u)=-\frac{1}{2}$ and $t$ is a strictly decreasing function, we deduce that $h'(u)\leq -\frac{1}{2}$ for $u\in ]|c/b|,u_0[$. $g$ is therefore also strictly decreasing.

We then study the case $c>0$. A zero $u_m$ of $g'$ fulfills:
$$
h'_{a,b,c}(u_m)e^{h_{a,b,c}(u_m)}=e^{2\pi \sqrt{ab}},
$$
then:
\begin{equation}\label{eqder}
e^{2\pi \sqrt{ab}-M(u_0)}<h'_{a,b,c}(u_m)<e^{2\pi \sqrt{ab}+M(u_0)}
\end{equation}
For $ab$ large enough, Eq.~\eqref{eqder} shows that $u_m$ belongs to a small neighborhood of $u=|c/b|$ when $|c/b|\ll 1$. Moreover, the function $h'_{a,b,c}$ can be bounded by:
$$
h'_{a,b,c}(u)\leq r(u),
$$
where $r(u)=\frac{1}{u}-\frac{1}{\sqrt{u(u+|c/b|)}\sqrt{1-u}}$. We have:
$$
r(u)\leq 0\Leftrightarrow u^2+|c/b|u-|c/b|>0.
$$
We obtain that $r(u)\leq 0$ and $h'_{a,b,c}\leq 0$ if $u\geq \sqrt{|c/b|}$ when $|c/b|\to 0$. In the interval $[|c/b|,\sqrt{|c/b|}]$, $h'_{a,b,c}$ is equivalent for $|c/b|\ll 1$ to:
$$
h'_{a,b,c}(u)\simeq \frac{1}{u}-\frac{1}{\sqrt{u(u+|c/b|)}},
$$
which is a strictly decreasing function tending to $+\infty$ when $u$ and $c$ go to 0. We deduce that there exists a unique $u_m$ such that $g'(u_m)=0$. We finally obtain that $g(u_m)>0$ and $g'(u)<0$ in $]u_m,u_0]$, which leads to the uniqueness of the solution $u_S$.
\end{proof}
Using Proposition~\ref{propTRE}, we can deduce Theorem~1 of the main text.
\begin{proof}
The proof follows directly from Proposition~\ref{propTRE} and the relation $\Delta\phi_{a,b,c}(\epsilon)=F_{a,b,c}(\sin^2\epsilon)$, since the change of variables $u=\sin^2\epsilon$ is a bijection from $[0,\pi/2]$ to $[0,1]$.
\end{proof}

Note that Proposition~\ref{propTRE} and Theorem~1 can alternatively be proved using the fixed point theorem. For $ab$ large enough, the condition \eqref{inequality} gives that $f:]|\beta|,u_0[\to ]|\beta|,u_0[$. The above proof was used because it gives in addition an interval in which the respective fixed points $u_S$ and $\epsilon_S$ of $f$ and $f\circ\arcsin$ belong, showing thus the corresponding limits for $u_S$ and $\epsilon_S$, when $ab\to+\infty$. On the other hand, the fixed point theorem also shows the robustness of the phenomenon.
\subsection{Analysis of the region $u<|\beta|$}
We consider now the function $F_{a,b,c}$ in the region $|u|<|\beta|$. We recall that this analysis only concerns the case with $c>0$ and that the result of Eq.~(5) of the main text does not hold.
\begin{lemma}
There exists a holomorphic function $k$ defined on
$$\mathcal{D}=\{v\in\mathbb{C}: |v|<\sqrt{|\beta|}\}$$
such that
$$
F_{a,b,c}(u)=k(\sqrt{u})
$$
i.e. $F(v^2)=k(v)$.
\end{lemma}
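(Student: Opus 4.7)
The plan is to pull the form $\omega$ back via the branched double cover $\pi: \mathcal{D} \to \{|x| < |\beta|\}$ given by $x = v^2$, and to recognize that after this substitution the integrand becomes a genuinely holomorphic $1$-form on the disk $\mathcal{D}$. Integrating that form from a base point then defines $k$.

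First I would localize the analysis: in the open disk $D = \{|x| < |\beta|\}$ the only branch point of $\omega$ is $x = 0$, because the remaining ramification points $\beta$, $1$, $\alpha$ all lie outside $D$ (recall $|\beta| < 1 < |\alpha|$). On the simply connected domain $D$ I would therefore pick a holomorphic branch of
$$
g(x) = \frac{1}{b}\,\frac{1-bx}{\sqrt{(x-\beta)(1-x)(x-\alpha)}},
$$
compatible with the branch used in the definition of $\omega$ along the real segment $(0,|\beta|)$, and write $\omega = g(x)\,dx/\sqrt{x}$ on $D$.

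Next I would substitute $x = v^2$. With the sign convention $\sqrt{x}=v$ consistent with the upper sheet near a real positive base point, the pullback becomes
$$
\pi^*\omega = 2\,g(v^2)\,dv,
$$
which is holomorphic on the whole disk $\mathcal{D}$ because $g$ is holomorphic on $D$. Fixing a real $v_0 \in (0, \sqrt{|\beta|})$ and setting $u_0 = v_0^2$, I would then \emph{define}
$$
k(v) = F_{a,b,c}(u_0) - \int_{v_0}^{v} 2\,g(w^2)\,dw,
$$
the integral being taken along any path in $\mathcal{D}$; since $\mathcal{D}$ is simply connected and the integrand is holomorphic, $k$ is path-independent and holomorphic on $\mathcal{D}$. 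To verify $k(v) = F_{a,b,c}(v^2)$, I would restrict to real $v \in (0,\sqrt{|\beta|})$: the image under $w \mapsto w^2$ of $[v_0,v]$ is the real segment from $u_0$ to $v^2$, which lies in the upper sheet, so the change of variables yields exactly $F_{a,b,c}(v^2) - F_{a,b,c}(u_0) = -\int_{v_0}^v 2g(w^2)\,dw$. The identity then extends throughout $\mathcal{D}$ by uniqueness of analytic continuation.

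The main (and essentially only) subtlety is bookkeeping: one must pin down consistent branches of the square roots in $g$ and the sign convention $\sqrt{v^2} = v$ at the base point so that the equation $F_{a,b,c}(u) = k(\sqrt{u})$ agrees with the specific branch of $F_{a,b,c}$ used in the paper. Conceptually, the content of the lemma is that the double cover $v\mapsto v^2$ uniformizes the branch point at $x=0$, so the monodromy of $F_{a,b,c}$ along a loop $\tilde\delta$ around $0$ is trivialized. This matches the observation $\int_{\tilde\delta}\omega = 0$ already recorded in the main text, and it is precisely why $F_{a,b,c}(v^2)$ extends to a single-valued holomorphic function on the disk $\mathcal{D}$.
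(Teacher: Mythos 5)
Your proposal is correct, and rests on the same geometric fact as the paper (the double cover $v\mapsto v^2$ uniformizes the branch point at $x=0$), but the execution is genuinely different. The paper's proof defines $k(v)=F_{a,b,c}(v^2)$ directly, then argues in two steps: first a monodromy computation showing $k$ is single-valued on the punctured disk (a loop once around $v=0$ lifts to a loop twice around $u=0$, and $\int_{\tilde\delta}\omega=0$ because one is effectively integrating $1/\sqrt{x}$ over a double loop), and second a boundedness check at the puncture ($k(0)=\int_0^1\omega$ is a finite complete elliptic integral), so Riemann's removable singularity theorem yields holomorphicity on all of $\mathcal{D}$. You instead factor $\omega = g(x)\,dx/\sqrt{x}$ with $g$ holomorphic on the small disk, pull back through $x=v^2$ to get the manifestly holomorphic form $2\,g(v^2)\,dv$ on $\mathcal{D}$, and define $k$ as its primitive; holomorphicity at $v=0$ is then automatic from the integrand, with no appeal to a removable-singularity theorem, and the identity $k(v)=F_{a,b,c}(v^2)$ follows by change of variables on the real segment plus analytic continuation. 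Your approach is a touch more constructive (it produces the Taylor coefficients of $k$ from those of $g$) and recovers the vanishing $\int_{\tilde\delta}\omega=0$ as a corollary rather than using it as an input; the paper's approach is shorter because it skips the explicit pullback and handles the origin by a soft removability argument. The branch-bookkeeping caveat you raise is real but routine, exactly as you say.
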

\begin{proof}
Turning around the origin in $u$, we do not catch the cycle $\delta$ as in the TRE, but a non-closed path. Turning twice around $x=0$, we catch a closed cycle $\tilde{\delta}$ winding twice around the branch point $x=0$ only. Note that here $\int_{\tilde{\delta}}\omega=0$. The result is equivalent to integrate $x\mapsto \frac{1}{\sqrt{x}}$ on a loop winding twice around zero.
Let $k$ be $k(v)=F_{a,b,c}(v^2)$. Then, we deduce that:
$$k(ve^{2\pi i})=F_{a,b,c}(v^2e^{4\pi i})=F_{a,b,c}(v^2)+\int_{\tilde{\tilde{\delta}}}\omega=F_{a,b,c}(v^2)=k(v).$$
Moreover, $k(0)=\int_0^1\omega<\infty$ is a complete elliptic integral.
Hence, $k$ has a removable singularity at the origin and extends to a holomorphic function on $\mathcal{D}.$
\end{proof}

Equation $\Delta\phi_{a,b,c}(\epsilon)=F_{a,b,c}(\sin^2\epsilon)$ becomes
$$
h_{a,b,c}(\sin^2\epsilon)=2\pi,
$$
where $h_{a,b,c}$ is a bounded and analytic function. Note that the nature of this equation, valid in the small region $\mathcal{D}$ is completely different from Eq.~\eqref{TRE}.

\section{Analysis of the Dzhanibekov effect}\label{secdzh}
After a very large number of rotations around its central axis, the wing nut flips suddenly around a perpendicular axis. This phenomenon occurs for rotating trajectories for which $c>0$. We have:
$$
\Delta\phi =\int_{-\frac{\pi}{2}}^{\frac{\pi}{2}}\frac{1-b\cos^2\psi}{\sqrt{(a+b\cos^2\psi)(c+b\cos^2\psi)}}d\psi.
$$
With the same change of coordinates as for TRE, we arrive at:
$$
\Delta\phi = \int_0^1 \frac{1}{b}\frac{1-bx}{\sqrt{x(x-\beta)(1-x)(x-\alpha)}}dx
$$
with $\alpha=-\frac{a}{b}$ and $\beta=-\frac{c}{b}$. This integral can be viewed as an Abelian integral for a cycle connecting the two branch points 0 and 1. It starts on one sheet of the Riemann surface and ends on the other.
We now estimate the integral in the real domain. We have:
$$
\Delta\phi =\frac{1}{\sqrt{ab}}\int_0^1\frac{1-bx}{\sqrt{x(x-\beta)(1-x)(1-\frac{x}{\alpha})}}.
$$
The variation $\Delta\phi$ can be written as the sum of two terms:
$$
\Delta\phi=\frac{1}{\sqrt{ab}}[h(c)+g(c)],
$$
with
$$
g(c)=\int_0^1\frac{1}{\sqrt{x(x-\beta)}}dx.
$$
Straightforward computations lead to:
$$
g(c)=2\ln(\frac{\sqrt{b}+\sqrt{b+c}}{\sqrt{c}}).
$$
We show in a second step that the function $h$ is bounded. We have:
$$
h(c)=\frac{1}{\sqrt{ab}}\int_0^1\frac{1}{\sqrt{x(x-\beta)}}(\frac{1-bx}{\sqrt{(1-x)(1-\frac{x}{\alpha})}}-1)dx.
$$
We can derive an upper bound as follows:
$$
|h(c)|\leq \frac{1}{\sqrt{ab}}\int_0^1\frac{1}{\sqrt{x(x-\beta)}}|\frac{1-bx-\sqrt{(1-x)(1-\frac{x}{\alpha})}}{\sqrt{(1-x)(1-\frac{x}{\alpha})}}|dx
$$
We obtain:
$$
|h(c)|\leq \frac{1}{\sqrt{ab}}\int_0^1\frac{1}{x}[\frac{1-\sqrt{1-x}}{\sqrt{1-x}}]dx\leq 2\ln 2,
$$
which is valid for $ab$ large enough. It is then straightforward to derive Eq.~(7) of the main text:
\begin{equation*}
\Delta\phi= \frac{1}{\sqrt{ab}}[h_{a,b}(c)-\ln(c)].
\end{equation*}
When $c\to 0$, we can estimate the variation $\Delta\phi$. Direct computations lead to:
$$
\Delta\phi\simeq \frac{1}{\sqrt{ab}}[\ln(\frac{4b}{c})+2\ln 2].
$$
The accuracy of this approximation is investigated in Fig.~\ref{figdzh} for a standard wing nut.
\section{Analysis of the Monster Flip}\label{secMF}
We study in this section the Monster Flip for a standard skate board. As in the main text, we introduce the function $\tilde{F}_{a,b,c}=\int_{\cos^2\psi_i}^{\sin^2\epsilon}$ and search for solutions of
\begin{equation}\label{monster}
\tilde{F}_{a,b,c}(u)=2\pi, \quad 0\leq u\leq 1,
\end{equation}
in the rotating case or
\begin{equation}\label{monster2}
\tilde{F}_{a,b,c}(u)=2\pi, \quad |\beta|\leq u\leq 1,
\end{equation}
for oscillating trajectories. Note that $\cos^2\psi_i$ is equal to 0 or to $|\beta|$ in the rotating and oscillating cases, respectively. Following the study used in TRE, we consider the two regions $0<u< |\beta|$ and $|\beta|<u<1$.
In the case $0<u< |\beta|$, which only concerns rotating trajectories, it can be shown that:
$$ \tilde{F}_{a,b,c}(u)=\tilde{h}_{a,b,c}(\sqrt{u}), $$
where $\tilde{h}_{a,b,c}$ is a holomorphic function vanishing at the origin. For the region ${|\beta|}<|u|<1$, we get:
$$
\tilde F_{a,b,c}(u)-\ln(u)\int_\delta\omega=\frac{1}{\sqrt{ab}}\tilde h_{a,b,c}(u),
$$
where $\tilde{h}_{a,b,c}$ is a single-valued function.

We consider now the different integrals in the real domain. Starting from the equation $\Delta\phi_{a,b,c}=\tilde F_{a,b,c}(u)$, we deduce that
\begin{equation}\label{M}
\tilde{F}_{a,b,c}(u)=e^{2\pi\sqrt{ab}+\tilde{h}_{a,b,c}(u)}.
\end{equation}
Approximate expressions of the variation $\epsilon$ can be obtained as follows. When $u\ll 1$, we have:
$$
\tilde F_{a,b,c}(u)=\frac{1}{b}\int_{\cos^2\psi_i}^u\frac{1-bx}{\sqrt{x(x-\beta)(1-x)(x-\alpha)}}dx\simeq \frac{1}{\sqrt{ab}}\int_0^u \frac{dx}{\sqrt{x(x-\beta)}},
$$
where we have replaced $x$ by 0 except in the factor $\sqrt{x(x-\beta)}$. A standard integration leads to:
$$
\tilde F_{a,b,c}(u)\simeq \frac{2}{\sqrt{ab}}\ln \big(\sqrt{1+\frac{u}{|\beta|}}+\sqrt{\frac{u}{|\beta|}}\big)
$$
The equation $\Delta\phi_{a,b,c}=2\pi=\tilde F_{a,b,c}(u)$ can then be approximated as:
$$
\sqrt{1+\frac{u}{|\beta|}}+\sqrt{\frac{u}{|\beta|}}=e^{\pi\sqrt{ab}}.
$$
In the case $0<u<|\beta|$, we have $\sqrt{1+\frac{u}{|\beta}}+\sqrt{\frac{u}{|\beta|}}\leq 1+\sqrt{2}$ and we recover the fact that the $\tilde{h}_{a,b,c}$- function is bounded. This also gives a strong constraint on the parameters $a$ and $b$:
$$
ab\leq \frac{[\ln(1+\sqrt{2})]^2}{\pi^2}
$$
The bound on the product $ab$ is of the order of 0.079 which means that this situation is not very interesting in practice since the rigid body has to be slightly asymmetric. The variation $\epsilon$ of MFE can be estimated as:
\begin{equation}\label{epsi1}
\epsilon\simeq \pi \sqrt{ac}.
\end{equation}
In the region $u>|\beta|$, a simple formula can be derived in the limit $u/|\beta|\gg 1$. A first order Taylor expansion leads to:
\begin{equation}\label{epsi2}
\epsilon\simeq \frac{\sqrt{|\beta|}}{2}e^{\pi\sqrt{ab}},
\end{equation}
which allows to estimate the bounded function $\tilde{h}_{a,b,c}$. These different approximations will be illustrated numerically in Sec.~\ref{secnum}.
\section{Numerical results}\label{secnum}
The goal of this paragraph is to illustrate numerically the different results established in this work.
Figure~\ref{fig7} gives a general overview of the twist $|\Delta \psi|$ of the head of the racket when the handle makes a $2\pi$- rotation. The twist is plotted as a function of the initial conditions $\psi_0$ and $d\psi/d\phi|_0$. In addition, this numerical result shows that the TRE and the MFE are not limited to the symmetric configuration analyzed in this study. We observe that TRE can be achieved in a large area around the separatrix. MFE occurs only in a very small band around the separatrix, which shows the difficulty to realize the Monster flip.
\begin{figure}[!ht]
	\centering
	\includegraphics[width=0.7\linewidth]{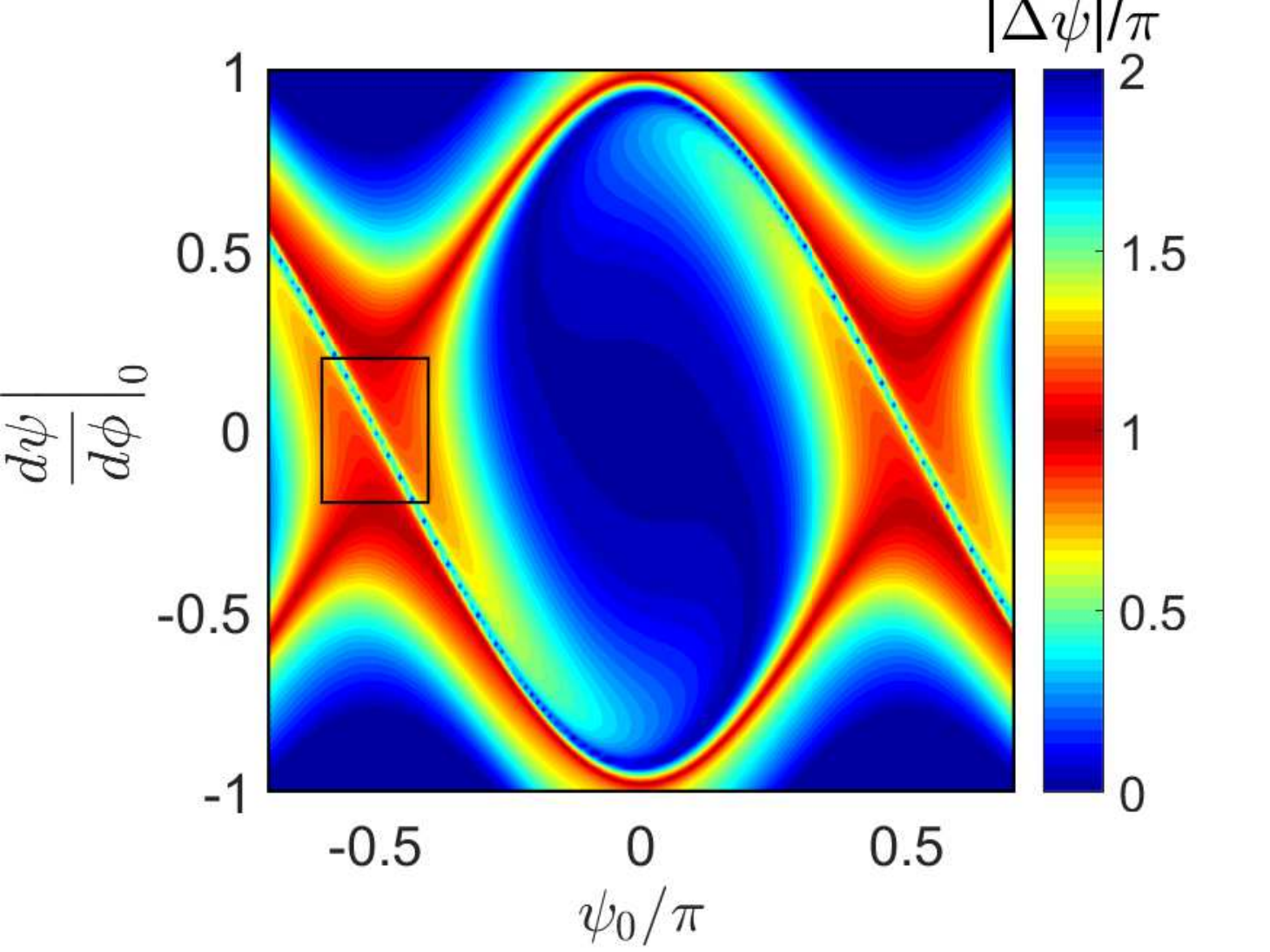}
    \includegraphics[width=0.7\linewidth]{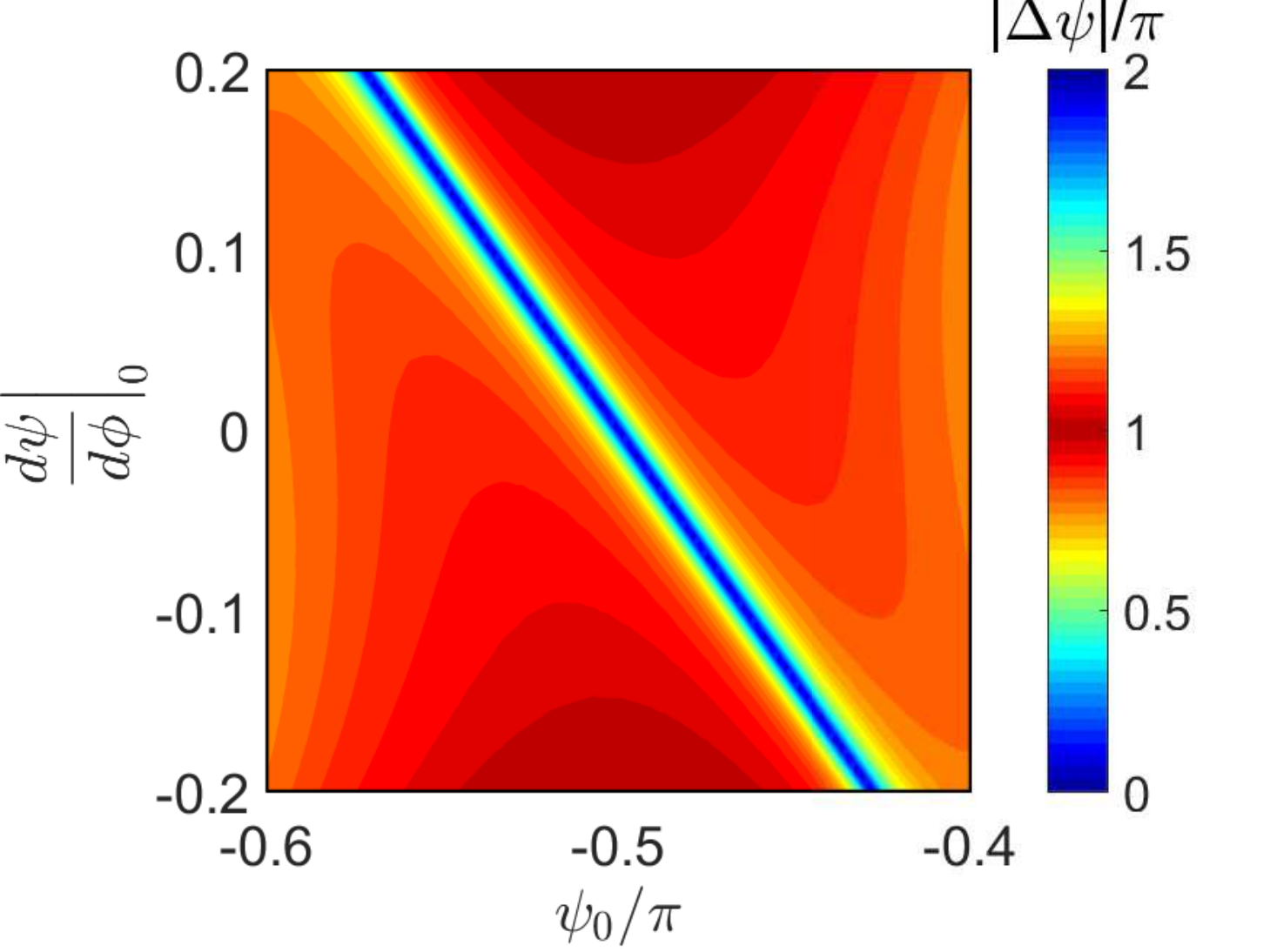}
	\caption{(Color online) (upper panel) Contour plot of the twist $|\Delta \psi|$, with $\Delta\phi=2\pi$, as a function of the initial conditions $\psi_0$ and $d\psi/d\phi|_0$. (lower panel) Zoom of the upper panel corresponding to the black rectangle.}
	\label{fig7}
\end{figure}

Figure~\ref{fig8} displays the evolution of $\epsilon$ in the TRE case. We observe that $\epsilon$ goes to zero when $a$ increases. For the chosen value of the $c$ parameter, it can be verified that $\epsilon >\epsilon_0$ for $a\leq 120$, with $\epsilon_0=\textrm{arcsin}[\sqrt{|c/b|}]$.
\begin{figure}[!ht]
	\centering
	\includegraphics[width=0.7\linewidth]{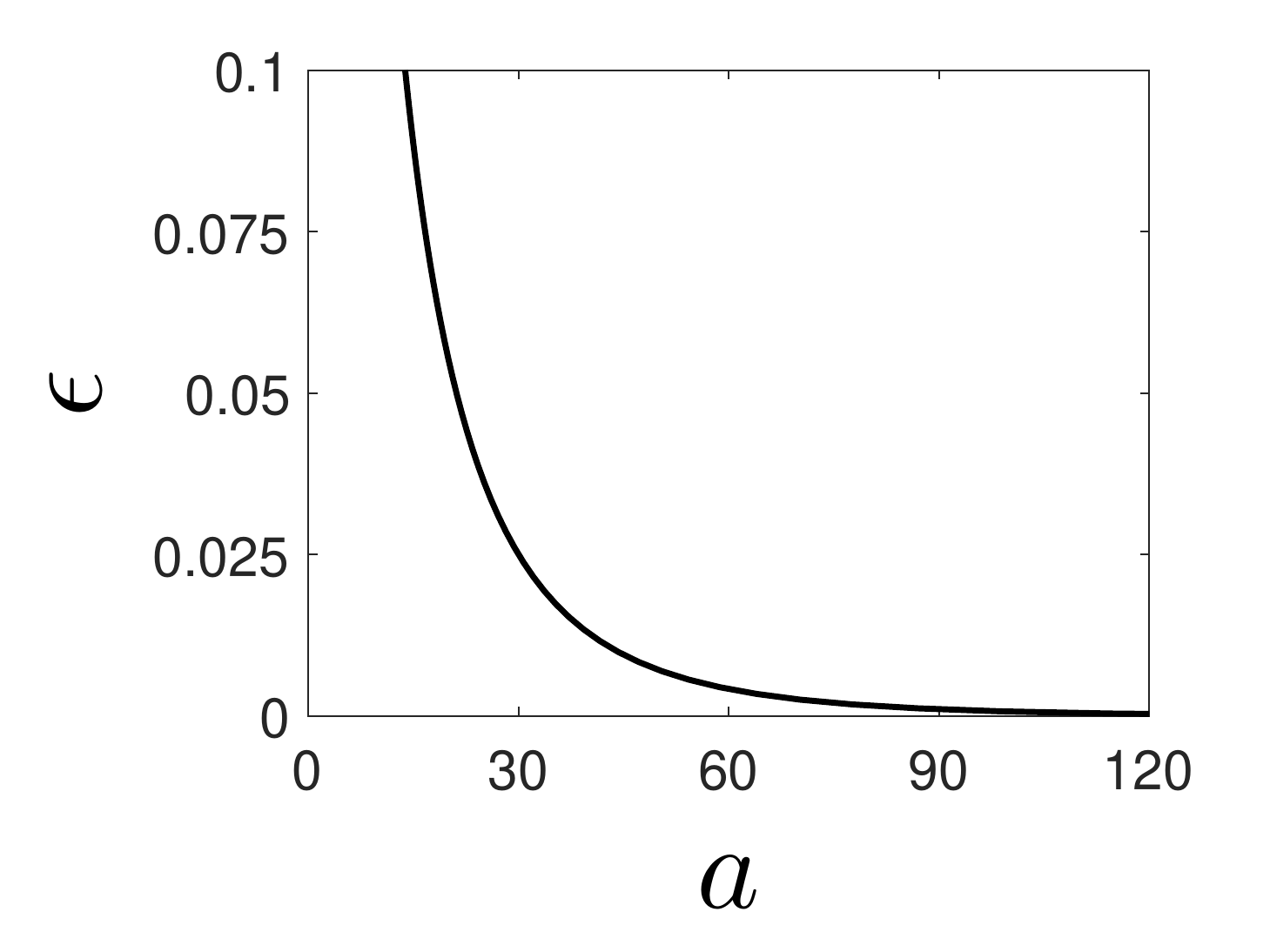}
\caption{(Color online) Evolution of $\epsilon$ as a function of $a$ in the TRE case. Parameters are set to $c=10^{-9}$ and $b=0.0629$, which leads to $\epsilon_0 = 1.26\times 10^{-4}$.}
	\label{fig8}
\end{figure}

We study in Fig.~\ref{figdzh} the evolution of $\Delta\phi$ with respect to the parameter $c$ for a standard wing nut. We observe the divergence of $\Delta\phi$ when $c$ goes to 0. Using the analysis of Sec.~\ref{secdzh}, we approximate with a good accuracy $\Delta\phi$ as follows:
\begin{equation}\label{phiapp}
\Delta\phi\simeq \frac{1}{\sqrt{ab}}[2\ln 2+\ln(\frac{4b}{c})].
\end{equation}
\begin{figure}[!ht]
	\centering
	\includegraphics[width=0.7\linewidth]{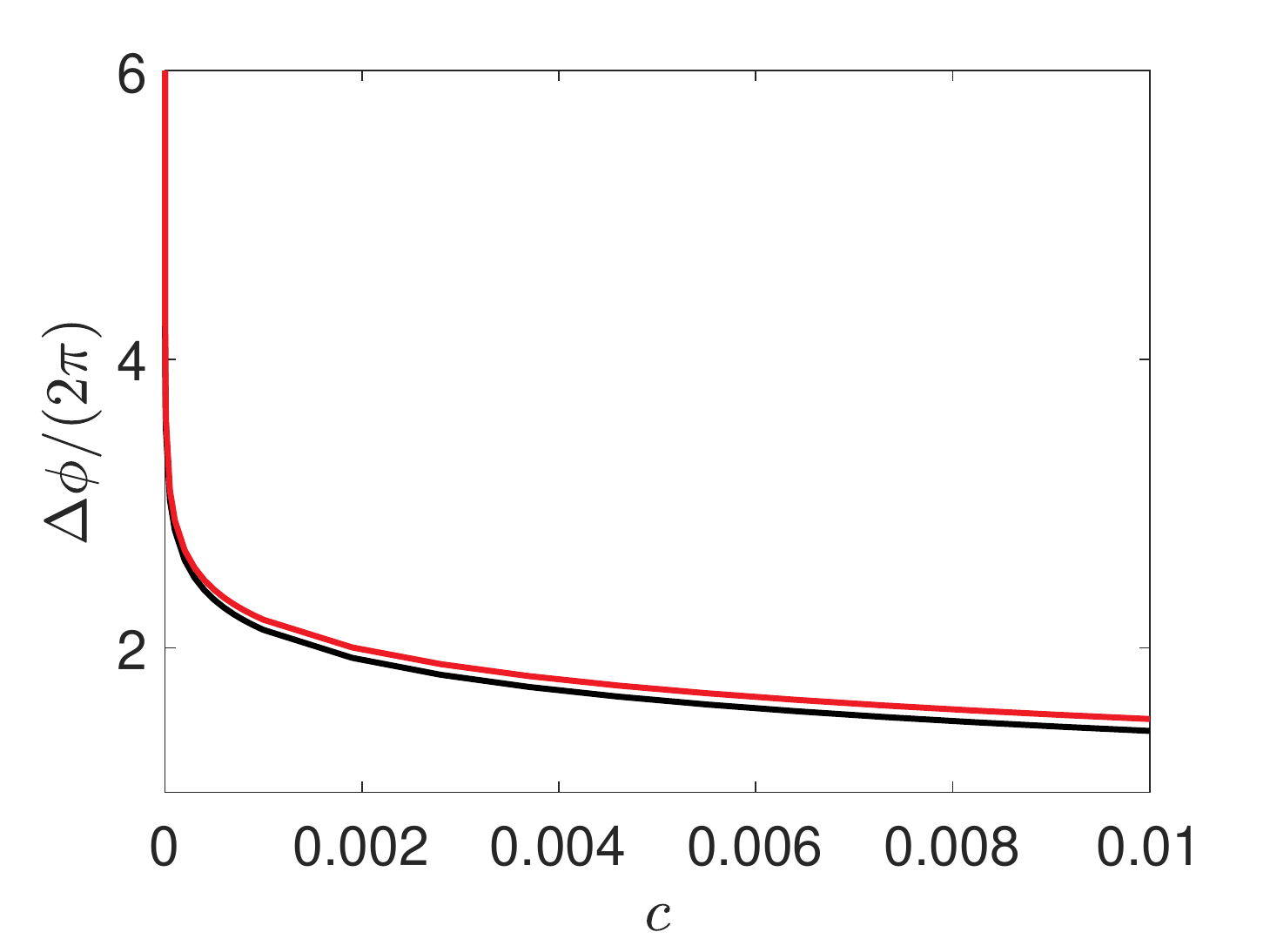}
\caption{(Color online) Evolution of the variation $\Delta\phi$ for the DE case as a function of $c$ (solid black line). The approximation of $\Delta\phi$ given in Eq.~\eqref{phiapp} is plotted in red (or dark gray). Parameters are set to $a=2.92$ and $b=0.097$.}
	\label{figdzh}
\end{figure}

The behavior of $\epsilon$ in the MFE case is represented respectively in Fig.~\ref{fig9} and \ref{fig10} in the region where $\epsilon<\epsilon_0$ and $\epsilon>\epsilon_0$. It can be seen that Eq.~(\ref{epsi1}) and (\ref{epsi2}) give a very good estimate of $\epsilon$. As could be expected, small values of $\epsilon$ are only obtained when $c$ is sufficiently small. The parameters $a$ and $b$ have been chosen so that $ab\leq \frac{[\ln(1+\sqrt{2})]^2}{\pi^2}$ in the first situation.
\begin{figure}[!ht]
	\centering
	\includegraphics[width=0.7\linewidth]{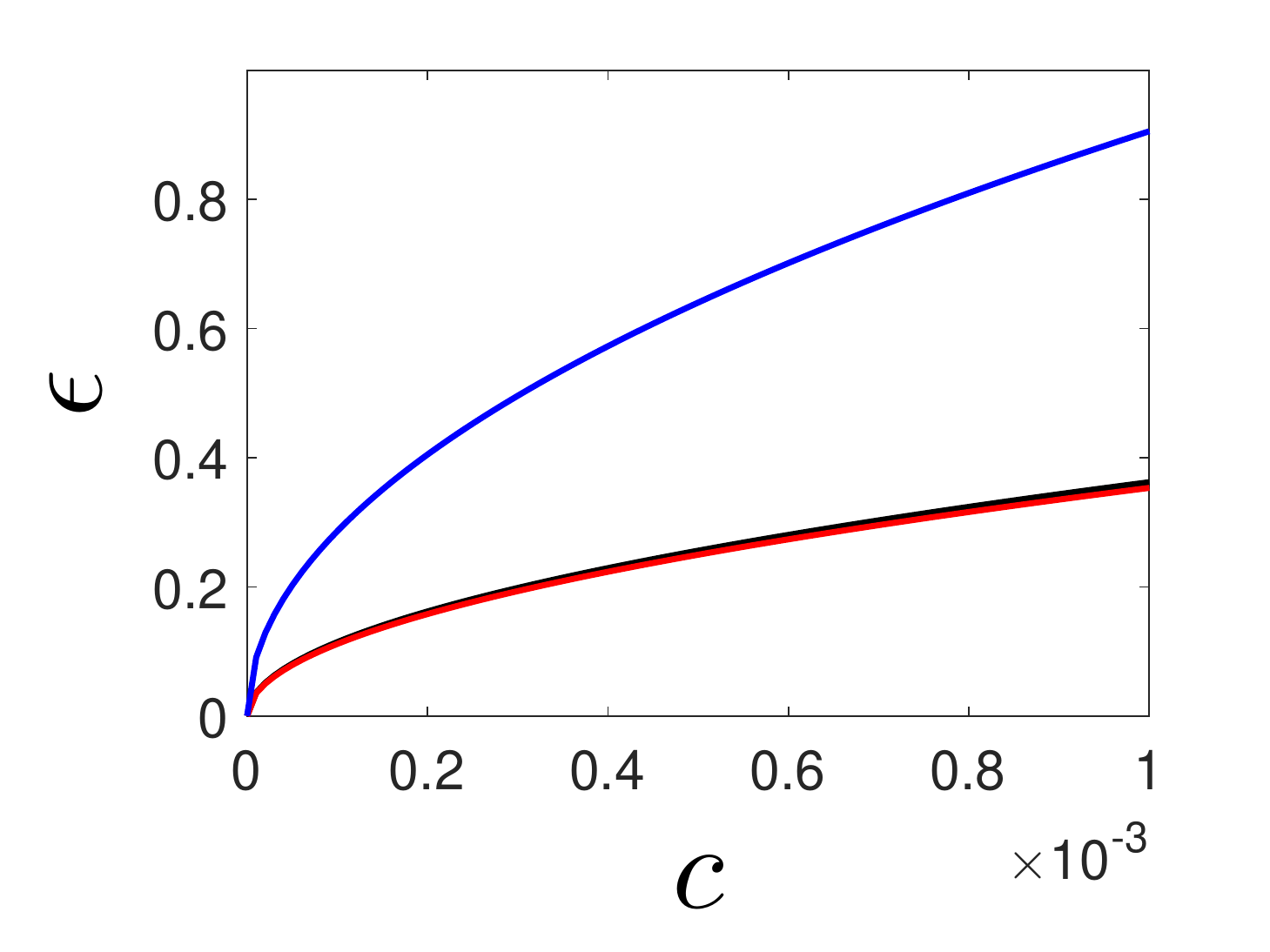}
\caption{(Color online) Evolution of $\epsilon$ (black and red lines) and $\epsilon_0$ (blue line) as a function of $c$ in the MFE case. The black and red curves depict respectively the numerical solution and the approximate expression of $\epsilon$ given by Eq.~(\ref{epsi1}). Parameters are set to $a=12.65$ and $b=0.0012$. Note that $ab< 0.079$.}
	\label{fig9}
\end{figure}

\begin{figure}[!ht]
	\centering
	\includegraphics[width=0.7\linewidth]{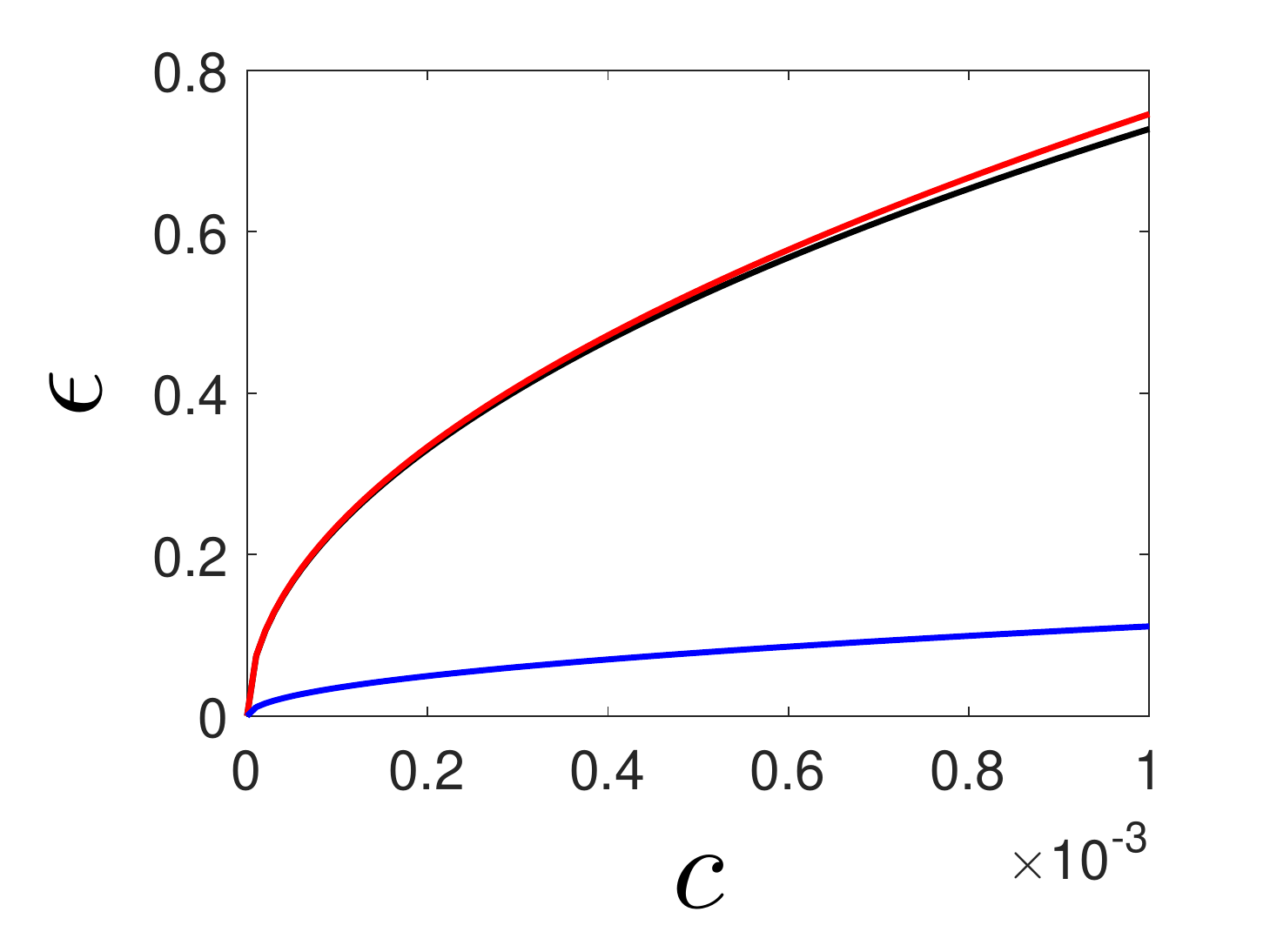}
\caption{(Color online) Same as Fig.~\ref{fig9} but for the region $\epsilon>\epsilon_0$. The approximate expression of $\epsilon$ is given by Eq.~(\ref{epsi2}). Parameters are set to $a=8.82$ and $b=0.0078$.}
	\label{fig10}
\end{figure}


\begin{thebibliography}{99}

\bibitem{TRE} A video about TRE is available at: https://www.youtube.com/watch?v=1VPfZ$\_$XzisU.

\bibitem{reilly} O. O'Reilly, \emph{Intermediate Dynamics for Engineers: Newton-Euler and Lagrangian Mechanics} (Cambridge, Cambridge University Press, 2020)

\bibitem{dzhanibekov} The video is available at:\\ https://www.youtube.com/watch?v=1x5UiwEEvpQ, see also  H. Murakami, O. Rios and T. J. Impelluso, J. App. Mech. \textbf{83}, 111006 (2016); P. M. Trivailo and H. Kojima, Trans. JSASS Tech. Japan \textbf{17}, 72 (2019).

\bibitem{monster} Two videos are available at:\\ https://www.youtube.com/watch?v=vkMmbWYnBHg and https://www.youtube.com/watch?v=yFRPhi0jhGc.

\bibitem{Golstein50} H. Goldstein, \emph{Classical Mechanics} (Addison-Wesley, Reading, MA, 1950).

\bibitem{cushman} R. H. Cushman and L. Bates, \emph{Global Aspects of Classical Integrable Systems} (Birkhauser, Basel, 1997).

\bibitem{arnold} V. I. Arnol’d, \emph{Mathematical Methods of Classical Mechanics} (Springer-Verlag, New York, 1989).

\bibitem{MSA91} M. S. Ashbaugh, C. C. Chicone and R. H. Cushman, J. Dyn. Diff. Eq. \textbf{3}, 67 (1991).

\bibitem{vandamme:2017} L. Van Damme, P. Marde\v si\'c and D. Sugny, Physica D \textbf{338}, 17 (2017)

\bibitem{phasebook} A. Bohm, A. Mostafazadeh, H. Koizumi, Q. Niu, and J. Zwanziger, \emph{The Geometric Phase in Quantum Systems} (Springer, Berlin, 2003).

\bibitem{efstathioubook} K. Efstathiou, \emph{Metamorphoses of Hamiltonian Systems
with Symmetry}, Lecture Notes in Mathematics Vol. 1864 (Springer-Verlag, Heidelberg, 2004)

\bibitem{dullin:2009} N. J. Fitch, C. A. Weidner, L. P. Parazzoli, H. R. Dullin, and H. J. Lewandowski, Phys. Rev. Lett. \textbf{103}, 034301
(2009)

\bibitem{co2:2004} R. H. Cushman et al., Phys. Rev. Lett. \textbf{93}, 024302 (2004)


\bibitem{sugny:2009} D. Sugny, A. Picozzi, S. Lagrange, and H. R. Jauslin, Phys. Rev. Lett. \textbf{103}, 034102 (2009).

\bibitem{faure} F. Faure and B. Zhilinskii, Phys. Rev. Lett. \textbf{85}, 960 (2000).

\bibitem{AGV} V. I. Arnol'd, S. M. Goussein-Zade, and A. N. Varchenko, \emph{Singularities of Differentiable Mappings} (Birkhauser, Boston, 1988).

\bibitem{zoladek:2006} H. Zoladek, \emph{The Monodromy Group} (Birkhauser, Boston, 2006).

\bibitem{audin:2002} M. Audin, Commun. Math. Phys. \textbf{229}, 459 (2002).

\bibitem{beukers:2002} F. Beukers and R. H. Cushman, Contemp. Math. \textbf{292}, 47 (2002).

\bibitem{sugny:2008} D. Sugny, P. Marde\v si\'c, M. Pelletier, A. Jebrane, and H. R. Jauslin, J. Math. Phys. \textbf{49}, 042701 (2008).





\bibitem{RMP} C. P. Koch, M. Lemeshko and D. Sugny, Rev. Mod. Phys. \textbf{91}, 035005 (2019)

\bibitem{rotationnumber} K. Hamraoui, L. Van Damme, P. Mardesic and D. Sugny, Phys. Rev. A \textbf{97}, 032118 (2018)


\bibitem{glaser15} S. J. Glaser et al., Eur. Phys. J. D \textbf{69}, 279 (2015)



\bibitem{QTRE} L. Van Damme, D. Leiner, P. Marde\v si\'c, S. J. Glaser and D. Sugny, Sci. Rep. \textbf{7}, 3998 (2017)




\end{thebibliography}

\begin{thebibliography}{99}

\bibitem{dzhanibekovref} H. Murakami, O. Rios and T. J. Impelluso, J. App. Mech. \textbf{83}, 111006 (2016)

\bibitem{tennispara} H. Brody, Phys. Teach. 213 (1985)

\end{thebibliography}
\end{document}